\newtheorem{theorem}{Theorem}
\newtheorem{lemma}[theorem]{Lemma}
\newtheorem{remark}[theorem]{Remark}
\title{\LARGE \bf
A Bernoulli-Gaussian Physical Watermark for Detecting Integrity Attacks in  Control Systems 
}
\author{Sean Weerakkody~~~~~~~~Omur Ozel~~~~~~~~Bruno Sinopoli
\thanks{S Weerakkody, O. Ozel, and B. Sinopoli are with the Department of Electrical and Computer Engineering, Carnegie Mellon University, Pittsburgh, PA, USA 15213. Email: {\tt\small \{sweerakk, oozel\}@andrew.cmu.edu, brunos@ece.cmu.edu}}
\thanks{S. Weerakkody is supported in part by the Department of Defense (DoD) through the National Defense Science \& Engineering Graduate Fellowship (NDSEG) Program. The work by S. Weerakkody,  O. Ozel, and B. Sinopoli is supported in part by the Department of Energy under Award Number DE-OE0000779 and by the National Science Foundation 
under award number CCF 1646526.}}
\begin{document} \maketitle
\begin{abstract}
We examine the merit of Bernoulli packet drops in actively detecting integrity attacks on control systems. The aim is to detect an adversary who delivers fake sensor measurements to a system operator in order to conceal their effect on the plant. Physical watermarks, or noisy additive Gaussian inputs, have been previously used to detect several classes of integrity attacks in control systems. In this paper, we consider the analysis and design of Gaussian physical watermarks in the presence of packet drops at the control input. On one hand, this enables analysis in a more general network setting. On the other hand, we observe that in certain cases, Bernoulli packet drops can improve detection performance relative to a purely Gaussian watermark. This motivates the joint design of a Bernoulli-Gaussian watermark which incorporates both an additive Gaussian input and a Bernoulli drop process. We characterize the effect of such a watermark on system performance as well as attack detectability in two separate design scenarios. Here, we consider a correlation detector for attack recognition. We then propose efficiently solvable optimization problems to intelligently select parameters of the Gaussian input and the Bernoulli drop process while addressing security and performance trade-offs. Finally, we provide numerical results which illustrate that a watermark with packet drops can indeed outperform a Gaussian watermark.
\end{abstract}
\section{Introduction}
The security of cyber-physical systems (CPS) has become a critical issue \cite{Cardenas:2008ke}. Since CPS such as the smart grid, waste management systems, water distribution systems, transportation systems, and smart buildings are linked to critical infrastructures, it is imperative that they operate securely. Unfortunately, attacks have occurred against CPS. This includes Stuxnet \cite{Chen2010}, which targeted uranium enrichment facilities in Iran, the Maroochy Shire incident \cite{Slay2008}, an attack by a malicious insider on a sewage management system, and the Ukraine power attack \cite{liang20172015}, a hack resulting in widespread blackouts in Ukraine. The threat does not appear to be over as the growing connectivity and heterogeneity of our system architectures provide new attack surfaces for adversaries.

We focus on detecting integrity attacks in control systems in the presence of packet drops at the control input. In an integrity attack, an adversary modifies inputs and sensor measurements in a control system. The goal of such an attacker may be to achieve some economic benefit or cause physical damage to a system. An attacker can potentially maximize his impact by hiding his presence from the operator. Remaining stealthy allows an attacker to affect the system for long periods of time without defender interference. The adversary can avoid detection by intelligently modifying the sensor measurements to fool detectors. One example is a replay attack, as used in Stuxnet, where an attacker replaces true outputs with a previously recorded sequence of measurements \cite{Mo2009R}. 

We consider the use of physical watermarking to detect integrity attacks. A physical watermark is a noisy Gaussian signal added on top of an optimal control input to authenticate a system's dynamics. Physical watermarking is a method of active detection, where a defender alters his strategy to recognize attacks. These methods are necessary when standard fault detection methods provably fail \cite{bai2015security}, \cite{weerakkody2016informationflow}. Recent work has investigated physical watermarking. In \cite{Mo2009R,Chabukswar2013,Mo2014,miao2013stochastic}, the design of watermarks against replay attacks was examined. Additionally, \cite{satchidanandan2017dynamic} and \cite{hespanhol2017dynamic} design asymptotic detectors in systems implementing physical watermarking to ensure zero additive distortion power is introduced into sensor measurements. Additionally, in a scalar setting, \cite{hosseini2016designing} demonstrates the optimality of Gaussian watermarks against Gaussian attackers and vice versa. \cite{rubio2017use} evaluates the use of non-stationary watermarks to hamper system identification. Finally, \cite{Weerakkody2014} considers watermarks to thwart adversaries who have access to a subset of inputs and model knowledge.

However, prior work fails to consider the scenario where there exists packet drops in the network. In this paper, we generalize the design of the Gaussian physical watermark by incorporating Bernoulli drops at the control inputs. This enables the operator to account for imperfect networks when designing a Gaussian watermark for secure detection. We also argue that using Bernoulli drops together with a Gaussian watermark can improve detection. This motivates the analysis and design of a joint Bernoulli-Gaussian watermark. In our preliminary work \cite{omur_smartgridcomm}, we proposed using packet drops in a setting without Gaussian watermarks to detect replay attacks. This article extends these results by providing a rigorous mathematical setting to jointly design parameters of both a Gaussian watermark and Bernoulli drop process. 

We investigate two types of watermark design: 1) a watermark  with an  independent and identically distributed (IID) Gaussian additive input multiplied by a Markovian Bernoulli drop process at the control input and 2) a watermark with a stationary Gaussian additive input generated by a hidden Markov model (HMM) multiplied by an IID Bernoulli drop process at the control input. We incorporate a correlation detector \cite{Chabukswar2011},\cite{Chabukswar2013} to recognize integrity attacks and characterize adversarial scenarios where the Bernoulli-Gaussian watermark is provably effective. Next, we provide efficiently solvable optimization problems to design parameters of the Gaussian input and the Bernoulli drop process. Simulation results illustrate scenarios where packet drops improve detection performance relative to a purely Gaussian watermark.

\section{System Model} \label{Model}
We consider a discrete time LTI control system as follows
\begin{equation}
x_{k+1} = Ax_k + Bu_{k,c} + w_k, ~~ y_k = Cx_k + v_k. \label{eq:openloop}
\end{equation}
 $x_k \in \mathbb{R}^n$ is the state vector at time $k$.  A set of $m$ sensor measurements $y_k \in \mathbb{R}^m$ is delivered to a supervisory control and data acquisition (SCADA) system at time $k$ in order to perform remote estimation and compute an intended control input $u_k \in \mathbb{R}^p$. A set of $p$ control inputs $u_{k,c} \in \mathbb{R}^p$ actuate the system. We differentiate $u_{k,c}$, the control input applied to the system, versus $u_k$, the input computed by a SCADA operator. We assume $w_k \sim \mathcal{N}(0,Q)$ is IID process noise and $v_k \sim \mathcal{N}(0,R)$ is IID measurement noise (independent of $\{w_k\}$), where $Q \succ 0, R \succ 0$. A Kalman filter performs state estimation as follows.
\begin{align}
&\hat{x}_{k+1|k} = A\hat{x}_{k|k} + Bu_{k,c}, ~\hat{x}_{k|k} = \hat{x}_{k|k-1} + Kz_k, \label{eq:KalmanEst}   \\
&K = PC^T(CPC^T+R)^{-1},~ z_k = y_k - C\hat{x}_{k|k-1}, \\
 &P= APA^T + Q - APC^T(CPC^T+R)^{-1}CPA^T.
 \end{align}
The defender minimizes a cost function $J$:
\begin{equation}
J = \lim_{N \rightarrow \infty} \frac{1}{2N+1} \mathbb{E} \left[ \sum_{k = -N}^N x_k^TWx_k + u_{k,c}^T U u_{k,c} \right],
\end{equation}
where $W \succ 0$ and $U \succ 0$. We assume $(A,B)$ and $(A,Q^{\frac{1}{2}})$ are controllable and $(A,C)$ and $(A,W^{\frac{1}{2}})$ are observable.

\subsection{Control in Uncertain Networks}
As shown in Fig. \ref{fig:model}, the control input $u_k$ may be dropped as it is sent from the SCADA system to the plant. Here, 
\begin{equation}
u_{k,c} = \eta_k u_k,
\end{equation} 
where $\eta_k \in \{0,1\}$ is a Bernoulli random variable. The control input $u_k$ may be dropped due to network imperfections. In this case, we assume the operator receives an acknowledgement (ACK), which specifies if $u_k$ was delivered. Alternatively, the input $u_k$ may be intentionally dropped as a means to watermark the system, enabling the detection of integrity attacks that fail to preserve the effect of the drop process. This strategy was initially investigated in \cite{omur_smartgridcomm}. We consider both IID and Markovian drop processes.

\begin{figure}[h]
\centerline{\includegraphics[height=.65\linewidth]{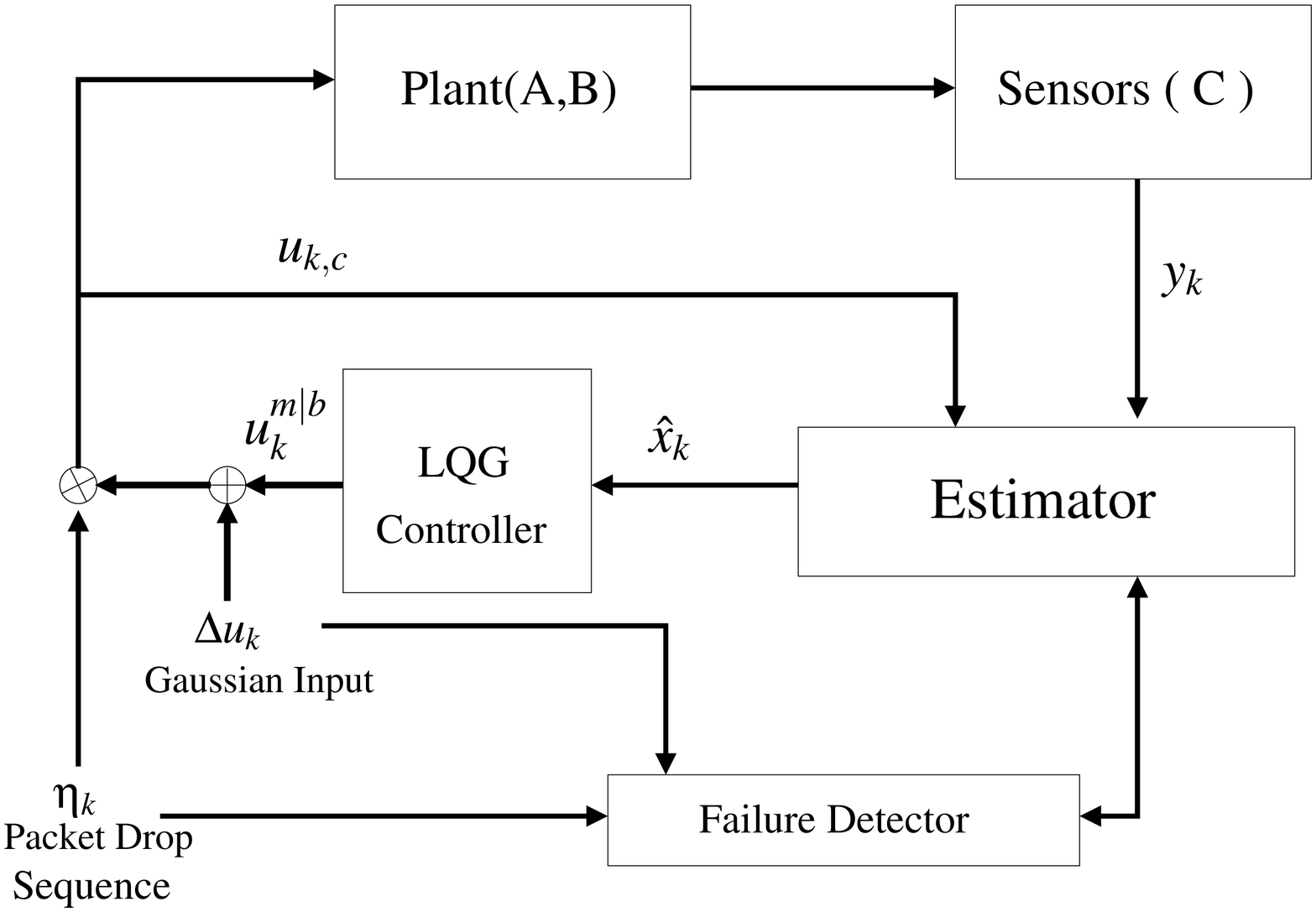}}
\caption{System model}
\label{fig:model}
\end{figure}

\subsubsection{IID Bernoulli Process}
First, we assume $\{\eta_k\}$ is an IID Bernoulli process where $P(\eta_k = 1) = 1-p_d$. LQG control with IID Bernoulli packet losses was studied in \cite{schenato2007foundations}.  Consider the information set $\mathcal{F}_k \triangleq \{y_{-\infty:k}, \eta_{-\infty:k-1}, u_{-\infty:k-1}\}$. We suppose $p_d$ is chosen (or given) so that the  system \eqref{eq:openloop} can have finite cost $J$. The optimal control strategy at time $k$ given $\mathcal{F}_k$ is as follows \cite{schenato2007foundations,omur_smartgridcomm}:
\begin{align}
u_k^b &= L_k \hat{x}_{k|k},~L_k = -(B^TS_{k+1}B+U)^{-1}B^TS_{k+1}A,  \nonumber \\
S_k &= A^TS_{k+1}A  + W + (1-p_d)A^TS_{k+1}BL_k. 
\end{align}
As we expect the system has been running for a long time, both $L_k$ and $S_k$ have converged to fixed point values so that
\begin{align}
u_k^b &= L_{(b)} \hat{x}_{k|k},~L_{(b)} = -(B^TS_{(b)}B+U)^{-1}B^TS_{(b)}A, \nonumber \\
S_{(b)} &= A^TS_{(b)}A  + W  + (1-p_d)A^TS_{(b)}B L_{(b)}. 
\end{align}
$J =J_{(b)}$ for this strategy where $J_{(b)}$ is
\begin{equation}
J_{(b)} = \mbox{tr}\left(S_{(b)} Q +(A^TS_{(b)}A+W-S_{(b)})(P-KCP)\right).
\end{equation}

\subsubsection{Markovian Bernoulli Process}
In this setup, we assume there are Markovian packet losses \cite{mo2013lqg} at the input where
\begin{equation}
\begin{bmatrix} P(\eta_{k+1}=0|\eta_k=0) P(\eta_{k+1}=1|\eta_k=0) \\ P(\eta_{k+1}=0|\eta_k=1) P(\eta_{k+1}=1|\eta_k=1) \end{bmatrix} = \begin{bmatrix} \bar{\alpha} & \alpha \\ \beta & \bar{\beta} \end{bmatrix} \label{eq:markov}
\end{equation}
and $\bar{\alpha} \triangleq 1-\alpha$, $\bar{\beta} \triangleq 1-\beta$. Here, we assume $0 < \alpha \le 1$, $0 < \beta \le 1$ so that $\eta_k$ is irreducible. Moreover, we assume $\eta_k$ is stationary, which can be obtained by letting its initial distribution be $P(\eta_{-\infty} = 0) = \frac{\beta}{\alpha+\beta}$. Finally, we assume that $\alpha$ and $\beta$ are selected (or given) so that the system \eqref{eq:openloop} can have finite cost $J$. The optimal strategy at time $k$ given $\mathcal{F}_k$ is 
\begin{align}
u_k^m &= L_{(m)} \hat{x}_{k|k},~L_{(m)}  = -(B^T R_{(m)} B + U)^{-1} B^T R_{(m)} A, \nonumber \\
R_m &= A^T(\beta S_{(m)} + \bar{\beta}R_{(m)})A + W + \bar{\beta}A^T R_{(m)} B L_{(m)} \nonumber, \\
S_m &= A^T(\bar{\alpha} S_{(m)} + \alpha R_{(m)})A + W  + \alpha A^T R_{(m)} B L_{(m)} \nonumber,
\end{align}
where $L_{(m)}, R_{(m)}, S_{(m)}$ are parameters which converged to their steady state values. The resulting cost of control is
\begin{align}
&J = J_{(m)} =  \frac{\mbox{tr}(\beta S_{(m)} Q + \alpha R_{(m)} Q)}{\alpha+\beta} \\
& + \frac{\mbox{tr}((A^T(\bar{\alpha}S_{(m)} + \alpha R_{(m)})A + W - S_{(m)})(P-KCP))}{\alpha+\beta} \nonumber.
\end{align}
\begin{remark}
The prior strategies are optimal when the defender only has knowledge of the observed drop sequence $\eta_{-\infty:k-1}$. However, if the drop sequence is intentionally introduced using a pseudo random number generator (PRNG), the defender knows future values of $\eta_k$. The design of a controller that uses this information is left for future work.
\end{remark}
\subsection{Joint Bernoulli-Gaussian Physical Watermarking}
To account for adversarial behavior, we consider additive Gaussian physical watermarks $\Delta u_k$. First introduced in \cite{Mo2009R} to detect replay attacks, an additive Gaussian watermark can be leveraged to verify the freshness of outputs. We aim to intelligently combine the Gaussian watermarks considered in \cite{Chabukswar2013} and \cite{Mo2014} with a Bernoulli drop process at the input. Such a design accomplishes two goals: 1) to expand the analysis of physical watermarking to a more realistic network setting with packet drops and 2) to potentially improve performance by considering a more general joint Bernoulli-Gaussian watermark.

We consider two main joint designs. \\
\textbf{Watermark 1: IID Gaussian Input + Markovian Drops}
\begin{equation}
u_{k,c} = \eta_k(u_k^m + \Delta u_k). \label{eq:markovian input}
\end{equation}
 $\{\eta_k\}$ is a Markovian Bernoulli process and $\Delta u_k \sim \mathcal{N}(0,\mathcal{Q})$ is an IID Gaussian watermark \cite{Mo2009R}. We assume $\Delta u_k$ is independent of other stochastic processes in the system.\\
\textbf{Watermark 2: Stationary Gaussian Input + IID Drops} \\
\begin{equation}
u_{k,c} = \eta_k(u_k^b + \Delta u_k).
\end{equation}
In this case, $\{\eta_k\}$ is an IID Bernoulli process. The Gaussian input $\Delta u_k$ is assumed to be a stationary process generated by a hidden Markov model (HMM) as considered in \cite{Mo2014}.
\begin{equation}
\zeta_{k+1} = A_\omega \zeta_k + \psi_k, ~~~ \Delta u_k = C_h \zeta_k \label{eq:stationary}.
\end{equation}
$\zeta_k$ is the hidden state of the HMM, $A_{\omega}$ has spectral radius $\rho(A_\omega) \le \bar{\rho} \le 1$, and $\psi_k \sim \mathcal{N}(0,\Psi)$ is IID Gaussian noise. For stationarity, $\mbox{Cov}(\zeta_{0}) = A_{\omega}\mbox{Cov}(\zeta_{0})A_{\omega}^T + \Psi$. $\Delta u_k$ is independent of  other stochastic processes in the system. 
\begin{remark}
Here, $\bar{\rho}$, the maximum allowable spectral radius, is a design parameter for the defender. We observe a larger $\bar{\rho}$ improves expected detection performance. However, a larger $\bar{\rho}$ means a larger correlation between watermarks and this could facilitate the prediction of future watermarks if the attacker guesses an initial Gaussian input $\Delta u_k$. 
\end{remark}

\section{Attack Model} \label{attack}
In this section we describe a model of our adversary  in terms of knowledge, capabilities, and potential strategies.
\subsection{Attacker Capabilities}
Without loss of generality, we assume an attack begins at time $k = 0$. We make the following assumptions.
\begin{enumerate}
\item The attacker can modify all measurements $y_k$, $k \ge 0$. The falsified outputs at time $k$ are denoted by $y_k^v$.
\item The attacker inserts an input $B^a u_k^a$ into the system.
\item The attacker is unable to read the true control inputs $u_{k,c}$. As a result, he is unaware of the drop sequence $\{\eta_k\}$ and the Gaussian watermark $\{\Delta u_k\}$.
\end{enumerate}
The system under attack is given by
\begin{align}
x_{k+1} &= Ax_k + Bu_{k,c} + B^au_k^a + w_k, \\
\hat{x}_{k+1|k+1} &= (I-KC)(A\hat{x}_{k|k}+Bu_{k,c}) + Ky_{k+1}^v.
\end{align}
\begin{remark}
Attackers can inject $B^a u_k^a$ by appropriating the defender's actuators or inserting their own. The attacker could possibly modify inputs without being able to read them if the inputs are encrypted. Alternatively, the attacker can cause damage even if $B^au_k^a = 0$. For example, the attacker can destabilize the plant if $A$ is open loop unstable.
\end{remark}

\subsection{Attack Strategy} 
The attacker generates $y_k^v$ through a virtual system: 
\begin{align}
x_{k+1}^v &= Ax_k^v + \eta_k^v B (L_{m|b} \hat{x}_{k|k}^v+\Delta u_k^v) + w_k^v, \\
\hat{x}_{k+1|k+1}^v &= (I-KC)(A+\eta_k^v BL_{m|b})\hat{x}_{k|k}^v  + Ky_{k+1}^v\\ 
 &+ \eta_k^v(I-KC)B\Delta u_k^v , \nonumber \\
 y_k^v &= C x_k^v + v_k^v.
\end{align}
In the case of Watermark 1, $L_{m|b} = L_{(m)}$, $\eta_k^v$ follows a Markovian process \eqref{eq:markov} with parameters $\alpha$ and $\beta$ and $\Delta u_k^v \sim \mathcal{N}(0,\mathcal{Q})$ is an IID Gaussian process. In the case of Watermark 2, $L_{m|b} = L_{(b)}$, $\eta_k^v$ is an IID Bernoulli process with drop probability $p_d$ and $\Delta u_k^v$ is a stationary Gaussian process which satisfies \eqref{eq:stationary}. Additionally, $v_k^v \sim \mathcal{N}(0,R)$ and $w_k^v \sim \mathcal{N}(0,Q)$ are IID processes. Finally, we assume the stochastic processes $\{\eta_k^v,\Delta u_k^v, w_k^v, v_k^v\}$ are independent of the real system's stochastic parameters $\{\eta_k,\Delta u_k, w_k, v_k\}$.

The previous attack strategy can be generated (approximately) by a replay attack where the attacker records a long sequence of outputs $y_{-T':-T'+T}$ and, starting at time $0$, replaces $y_k$ with $y_k^v = y_{k-T'}$ for $0 \le k \le T$. Attackers who do not have precise knowledge of the model may engage in replay attacks, which only require access to the outputs \cite{Mo2009R,Chabukswar2013,Mo2014}. Alternatively, this attack strategy can be constructed by an adversary who is familiar with the model, for instance a malicious insider. In this case, the attacker simulates a virtual copy of the system dynamics to fool a bad data detector. It was previously shown \cite{Mo2014} that if $p_d = 0$ and there is no Gaussian watermark, the given strategies are asymptotically stealthy when $\mathcal{A} \triangleq (A+BL_{(b)})(I-KC)$ is Schur stable.

A model aware attacker could also potentially pursue an additive attack, for instance a false data injection attack \cite{moscs10security} or a zero dynamics attack \cite{PasqualettiJournal, teixeira2012}. In these attacks, the adversary injects an additive bias into the system which preserves the watermark and allows the attacker to remain stealthy. However, there are scenarios where additive attacks on sensor measurements are not feasible. As an example, suppose the defender uses public key cryptography, where a public key is used to encrypt the measurements while a private key is used to decrypt the associated cipher text. An attacker could send his own virtual measurements encrypted with the public key. However, such an attack could not leverage information in the true measurement as that would require access to the defender's private key to learn $y_k$. In this case, additive attacks constructed by replacing a true output packet with a virtual packet would be infeasible. By assumption, an additive networked-based attack on the defender's control input is also impossible because the adversary is unable to read the defender's input. 

We argue that alternative attack strategies which manipulate all sensors $y_k$ in a setting with public key cryptography also fail due to the fact that the resulting attack sequence $\{y_k^v\}$ is independent of the watermarks $\{\Delta u_k, \eta_k\}$. Specifically, an attacker who is unable to read the inputs or outputs will have no information about the watermarks. As a result, the outputs he can construct will fail to fool the correlation detector, which we propose in the next section.

\section{A Correlation Detector} \label{detector}
We consider a correlation detector, proposed in \cite{Chabukswar2013}. The defender computes a virtual output $y_k'$, which explicitly characterizes the effect of watermarks on $y_k$.
\begin{align}
&x_{k+1}' = Ax_k' + \eta_k B (L_{m|b} \hat{x}_{k|k}'+\Delta u_k),~~~y_k' = C x_k', \label{eq:virst} \\
&\hat{x}_{k+1|k+1}' = (I-KC)(A+\eta_k BL_{m|b})\hat{x}_{k|k}' + Ky_{k+1}' \label{eq:virstest} \\ 
&~~~~~~~~~~~+ \eta_k (I-KC)B \Delta u_k, \nonumber 
\end{align}
where with some abuse of notation  $x_{-\infty}' = 0, \hat{x}_{-\infty|-\infty}' = 0$. We can simplify \eqref{eq:virst} and \eqref{eq:virstest} to obtain
\begin{equation}
x_{k+1}' = (A + \eta_k BL_{m|b}) x_k' + \eta_k B \Delta u_k,~~~y_k' = C x_k' \label{eq:virsimple}.
\end{equation}
This virtual process created by the defender is driven entirely by the sequence of Bernoulli-Gaussian watermarks $\{\Delta u_k, \eta_k\}$. Thus, if we were to multiply the true outputs $y_k$ with the defender's virtual outputs $y_k'$ we would expect a positive correlation. However, if an attacker introduces measurements $y_k^v$, which are driven by an independent sequence of watermarks, the expected correlation drops to 0. This motivates consideration of the detection statistic $y_k^T y_k'$, where a large statistic is indicative of normal behavior while a small statistic indicates malicious behavior. Observe due to the random real time selection of watermarks, $\|y_k'\|_2$ may be close to 0, impacting detector performance since the correlation will likely also approach 0 even under normal operation. As a result, we propose an event triggered detector:
\begin{align}
\mbox{\bf{If} }  &\|y_k'\|_2^2 \ge \mu \nonumber  ~~~~~~~~~~ \mbox{\textit{Perform Detection}} \\
 &\kappa = \kappa+1, ~~t_{\kappa} = k \nonumber \\
&\sum_{j = \kappa - \mathcal{W} + 1}^{\kappa} g_j \overset{\mathcal{H}_0}{\underset{\mathcal{H}_1}{\gtrless}}  \tau ,~~~g_\kappa = y_{t_{\kappa}}^Ty_{t_{\kappa}}'.
\end{align}
The null hypothesis $\mathcal{H}_0$ is that the system is operating without malicious behavior while the alternative hypothesis $\mathcal{H}_1$ is that the system is under attack. $\mathcal{W}$ is the size of the detector's window. A detection event is triggered if $\|y_k'\|_2^2$ is greater than some user defined threshold $\mu$, preventing false alarms from being raised when $y_k'$ is small, while sacrificing time to detection. This tradeoff can be addressed by tuning $\mu$. Note that $\kappa$ corresponds to the time index of the event triggered correlation detector and increases at instants when a new detection statistic is computed. Identifying attacks on an individual sensor $i$ can be done by focusing on the correlation between individual measurements. An appropriate statistic $g_\kappa^i$ would be $y_{t_\kappa}^i{y_{t_\kappa}^i}'$ where $y_{t_\kappa}^i$ is the $i$th entry of $y_{t_\kappa}$.
\begin{remark} A detector with an adaptive threshold could address issues of small $y_k'$. However, such a detector is more prone to misses, mistaking an attack for noise. Incorporation and analysis of such a detector is left for future work.
\end{remark}

\begin{remark}
An adversary that can not read $\{u_k\},\{y_k\}$ can not take advantage of instances when detection does not occur, because such instances are entirely dependent on the realization of previous watermarks. An attacker who is forced to act independently of the real time watermarking sequence cannot determine if a detection has been triggered.
\end{remark}

We now verify that the expected correlation is $0$, if the outputs $y_k^v$ are generated independently of the watermarks.
\begin{theorem}
If $y_k^v$ and $\{\Delta u_k, \eta_k\}$ are independent, then 
\begin{equation*}
\mathbb{E}\left[{y_k^v}^T y_k' \Big| ~ \|y_k'\|_2^2 \ge \mu \right] = 0.
\end{equation*}
\begin{proof}
Observe that $y_k'$ can be written as a linear function of the Gaussian watermarks $\Delta u_k$ so that 
\begin{equation}
y_k' = \sum_{j = -\infty}^{k-1} G_j(\eta_{j:k-1}) \Delta u_j,
\end{equation}
where $G_j$ is some linear gain, determined by the sequence of Bernoulli drops $\eta_{j:k-1}$. Thus, we have
\begin{align*}
&\mathbb{E}[{y_k^v}^T y_k'] = \mathbb{E}\left[ {y_k^v}^T \sum_{j = -\infty}^{k-1} G_j(\eta_{j:k-1}) \Delta u_j \Bigg| ~ \|y_k'\|_2^2 \ge \mu \right] \\
                           &= \sum_{j = -\infty}^{k-1} \mathbb{E} \left[{y_k^v}\right]^T \mathbb{E}\left[ G_j(\eta_{j:k-1}) \Delta u_j \Big| ~ \|y_k'\|_2^2 \ge \mu  \right]  = 0.                                             \end{align*}
\end{proof}
\end{theorem}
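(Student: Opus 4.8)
The plan is to exploit that the defender's virtual output $y_k'$ is a zero-mean, \emph{odd} function of the Gaussian watermark sequence while the triggering event is \emph{even} in it. First I would unroll the closed-loop recursion \eqref{eq:virsimple}: since $x_{-\infty}' = 0$ and the only forcing term is $\eta_k B\Delta u_k$, iterating expresses $x_k'$, and hence $y_k' = Cx_k'$, as a convergent sum
\[
y_k' = \sum_{j=-\infty}^{k-1} G_j(\eta_{j:k-1})\,\Delta u_j ,
\]
where each matrix gain $G_j$ is a deterministic function of the drop subsequence $\eta_{j:k-1}$ alone (a product of $C$, the matrices $A+\eta_i BL_{m|b}$ for $j<i<k$, and $\eta_j B$). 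Convergence of the series is inherited from the stability assumptions that guarantee the system has finite cost, which I would cite rather than re-derive.

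Next I would invoke the independence hypothesis. Writing $E \triangleq \{\|y_k'\|_2^2 \ge \mu\}$, which is an event measurable with respect to $\{\Delta u_k,\eta_k\}$, the factor $y_k^v$ is independent of the pair $(y_k',\mathbbm{1}_E)$ and can be pulled out of the conditional expectation:
\[
\mathbb{E}\!\left[{y_k^v}^T y_k' \,\big|\, E\right] = \sum_{j=-\infty}^{k-1}\mathbb{E}[y_k^v]^T\,\mathbb{E}\!\left[G_j(\eta_{j:k-1})\Delta u_j \,\big|\, E\right].
\]
It therefore suffices to prove $\mathbb{E}[G_j(\eta_{j:k-1})\Delta u_j \mid E] = 0$ for each $j$.

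The heart of the argument is a reflection/symmetry step for this last identity. I would condition additionally on the entire drop sequence $\{\eta_k\}$, which freezes every gain $G_j$; the remaining randomness, the watermark trajectory $\{\Delta u_j\}$, is a zero-mean Gaussian process (IID in Watermark~1, the stationary HMM output \eqref{eq:stationary} in Watermark~2) and, by construction, independent of $\{\eta_k\}$. The involution $\{\Delta u_j\}\mapsto\{-\Delta u_j\}$ preserves the law of this trajectory, sends $y_k'\mapsto -y_k'$ (hence leaves $\|y_k'\|_2^2$ and $\mathbbm{1}_E$ fixed), and sends $G_j\Delta u_j\mapsto -G_j\Delta u_j$. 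Thus $\mathbb{E}[G_j\Delta u_j\,\mathbbm{1}_E \mid \{\eta_k\}]$ equals its own negative and is $0$, so $\mathbb{E}[G_j\Delta u_j \mid E,\{\eta_k\}]=0$ wherever $P(E\mid\{\eta_k\})>0$; averaging over $\{\eta_k\}$ under the law conditioned on $E$ and summing over $j$ completes the proof.

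The main obstacle I expect is one of bookkeeping rather than substance: the reflection must be performed only after conditioning on $\{\eta_k\}$, because the gains $G_j$ are themselves random, and one must check that negating the watermark trajectory is a measure-preserving involution in \emph{both} designs --- immediate for the IID Gaussian case, and true for the HMM case because a jointly Gaussian zero-mean process is symmetric about the origin. One could instead observe that $\mathbb{E}[y_k^v]=0$, since the virtual attack dynamics are driven by zero-mean noise from zero initial conditions, but the symmetry route is cleaner and remains valid even if the attacker injects a deterministic bias.
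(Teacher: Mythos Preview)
Your proposal is correct and follows essentially the same approach as the paper: both unroll \eqref{eq:virsimple} into $y_k'=\sum_{j} G_j(\eta_{j:k-1})\Delta u_j$ and then factor the conditional expectation using the independence of $y_k^v$ from the watermarks and from the triggering event $E$. Your reflection argument (condition on $\{\eta_k\}$, then use the sign symmetry of the zero-mean Gaussian watermark) makes explicit the final step---why $\mathbb{E}[G_j\Delta u_j\mid E]=0$---which the paper's proof asserts but does not justify.
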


The proposed detector can often differentiate between faulty and malicious scenarios. During a fault, we expect to see the effect of the embedded watermarks in the output and it could be measured through correlation. Alternatively, residue based detectors such as a $\chi^2$  detector ($g_{\kappa} = -z_{t_\kappa}^T(CPC^T+R)^{-1}z_{t_\kappa}$), which measures the difference between measured and expected behavior, will likely raise an alarm during faulty behavior and malicious behavior. Both detectors can be used in tandem. A residue based detector can raise alarms in the case of faulty or malicious behavior, while a correlation detector can distinguish these events. In this article, we focus on the correlation detector.
 

\section{The First Watermark Design} \label{watermark1}
We consider the design of a watermark consisting of an IID Gaussian input and Markovian drops. This requires the evaluation of a detection and performance trade-off. We wish to maximize the correlation of  $y_k$ and $y_k'$ to distinguish the system under attack from normal operation. However, we also need to ensure the system meets an adequate level of performance as characterized by the cost $\bar{J}$, starting at $k = 0$. 
\begin{equation}
\bar{J} = \lim_{N \rightarrow \infty} \frac{1}{N} \mathbb{E} \left[ \sum_{k = 0}^{N-1} x_k^TWx_k + u_{k,c}^T U u_{k,c} \right]
\end{equation}

As such, we design the parameters $\alpha, \beta, \mathcal{Q}$ by solving the following optimization problem
\begin{equation}
\begin{aligned}
& \underset{\alpha, \beta, \mathcal{Q}}{\text{maximize}}
& & \lim_{k \rightarrow \infty} \mathbb{E}[y_k^Ty_k'|\mathcal{H}_0] \\
& \text{subject to}
& & \bar{J} \leq \delta,  ~ 0 < \alpha, \beta \le 1.
\end{aligned} \label{eq:opt}
\end{equation}

To begin with, we use \cite[Theorem 3]{mo2013lqg} to analytically compute the cost $\bar{J}$ as follows. 
\begin{theorem}
Suppose $\alpha$ and $\beta$ are chosen so that the system has finite cost $J_{(m)}$ in the absence of a Gaussian watermark. The LQG cost $\bar{J}$ of the control system \eqref{eq:openloop} with IID Gaussian and Markovian watermark \eqref{eq:markovian input} is:
\begin{equation}
\bar{J} = J_{(m)}(\alpha,\beta) + \frac{\alpha}{\alpha+\beta} \mbox{tr}\left((B^TR_{(m)}B+U)\mathcal{Q}\right).
\end{equation}
\end{theorem}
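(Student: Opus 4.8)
The plan is to split the closed-loop trajectory by superposition and to isolate the incremental cost produced by the additive Gaussian watermark. Write $x_k = x_k^{(0)} + \tilde{x}_k$, $\hat{x}_{k|k} = \hat{x}_{k|k}^{(0)} + \hat{\tilde{x}}_{k|k}$, and $u_{k,c} = u_{k,c}^{(0)} + \tilde{u}_{k,c}$, where the $(0)$ quantities are the trajectory of \eqref{eq:openloop}--\eqref{eq:KalmanEst} under the unwatermarked input $u_{k,c} = \eta_k u_k^m$ (the case $\mathcal{Q} = 0$), and the tilde quantities collect the perturbation caused by $\{\Delta u_k\}$. By linearity, the perturbation channel carries no process or measurement noise: it is driven only by $\eta_k B \Delta u_k$, and a short calculation shows the perturbed estimation error $\tilde{e}_k \triangleq \tilde{x}_k - \hat{\tilde{x}}_{k|k}$ satisfies $\tilde{e}_{k+1} = (I-KC)A\,\tilde{e}_k$ with $\tilde{e}_{-\infty} = 0$, hence $\tilde{e}_k \equiv 0$: the operator sees the watermark-induced part of the state exactly. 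The perturbation is therefore a Markovian jump linear system with perfect state feedback,
\begin{equation}
\tilde{x}_{k+1} = (A + \eta_k B L_{(m)})\,\tilde{x}_k + \eta_k B \Delta u_k, \quad \tilde{u}_{k,c} = \eta_k\bigl(L_{(m)}\tilde{x}_k + \Delta u_k\bigr).
\end{equation}

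Next I would expand the stage cost $x_k^T W x_k + u_{k,c}^T U u_{k,c}$ in the $(0)$/tilde decomposition. Because $\Delta u_j \sim \mathcal{N}(0,\mathcal{Q})$ is zero mean and independent of $\{w_k,v_k,\eta_k\}$, while $\tilde{x}_k$ and $\tilde{u}_{k,c}$ are linear in $\Delta u_{-\infty:k-1}$ with coefficients measurable with respect to $\eta_{-\infty:k}$, conditioning on $\{w_j,v_j,\eta_j\}$ yields $\mathbb{E}[\tilde{x}_k \mid \{w_j,v_j,\eta_j\}] = 0$ and $\mathbb{E}[\tilde{u}_{k,c}\mid\{w_j,v_j,\eta_j\}] = 0$. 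Since $x_k^{(0)}$ and $u_{k,c}^{(0)}$ are functions of $\{w_j,v_j,\eta_j\}$, all cross terms vanish in expectation, giving $\bar{J} = J_{(m)}(\alpha,\beta) + \Delta J$, where $J_{(m)}(\alpha,\beta)$ is the baseline Markovian-drop cost (which coincides with the steady-state formula of Section~\ref{Model} by stationarity of $\{\eta_k\}$ and convergence of the state to its stationary law) and $\Delta J = \lim_{N\to\infty}\frac1N\,\mathbb{E}\bigl[\sum_{k=0}^{N-1}\tilde{x}_k^T W \tilde{x}_k + \tilde{u}_{k,c}^T U \tilde{u}_{k,c}\bigr]$.

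It remains to evaluate $\Delta J$, the average quadratic cost of this perturbation system, which is exactly the Markovian jump LQG setup of \cite[Theorem~3]{mo2013lqg} with the converged matrices $L_{(m)}, R_{(m)}, S_{(m)}$. The watermarked input $\eta_k(L_{(m)}\tilde{x}_k + \Delta u_k)$ differs from the cost-to-go-optimal input $\eta_k L_{(m)}\tilde{x}_k$ only by $\eta_k\Delta u_k$; since the value function of this jump problem is quadratic with input-Hessian $B^T R_{(m)} B + U$ on the event $\{\eta_k = 1\}$, and $\Delta u_k$ is zero mean and independent of $\tilde{x}_k$ and $\eta_k$, completing the square shows that each stage with $\eta_k = 1$ adds $\mathbb{E}\bigl[\Delta u_k^T (B^T R_{(m)} B + U)\Delta u_k\bigr] = \mbox{tr}\bigl((B^T R_{(m)} B + U)\mathcal{Q}\bigr)$ to the cost. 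Weighting by the stationary arrival probability $P(\eta_k = 1) = \alpha/(\alpha+\beta)$ gives $\Delta J = \frac{\alpha}{\alpha+\beta}\mbox{tr}\bigl((B^T R_{(m)} B + U)\mathcal{Q}\bigr)$, the claimed expression.

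I expect this last step to be the main obstacle. Pinning down the exact constant $\alpha/(\alpha+\beta)$ and the exact matrix $B^T R_{(m)} B + U$ requires either faithfully matching the hypotheses and notation of \cite[Theorem~3]{mo2013lqg} to the augmented control law \eqref{eq:markovian input}, or re-deriving the jump cost-to-go identity directly: one verifies that $\mathbb{E}[\tilde{x}_k^T S^{(\eta_{k-1})}\tilde{x}_k]$, with $S^{(0)} = S_{(m)}$ and $S^{(1)} = R_{(m)}$, telescopes against the stage cost through precisely the $S_{(m)}, R_{(m)}, L_{(m)}$ recursions of Section~\ref{Model}, the only surviving term being the trace produced by $\mathrm{Cov}(\Delta u_k) = \mathcal{Q}$ on the event that the packet is delivered. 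A secondary, routine point is reconciling the finite-window average defining $\bar{J}$ with the doubly-infinite average defining $J$, which follows from stationarity.
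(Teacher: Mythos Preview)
Your proposal is correct. The decomposition $x_k = x_k^{(0)} + \tilde{x}_k$, the vanishing of the perturbed estimation error $\tilde{e}_k$, the vanishing of cross terms by independence of $\{\Delta u_j\}$ from $\{w_j,v_j,\eta_j\}$, and the telescoping identity $\mathbb{E}[g_k] + \mathbb{E}[V_{k+1}] = \mathbb{E}[V_k] + P(\eta_k=1)\,\mbox{tr}((B^TR_{(m)}B+U)\mathcal{Q})$ with $V_k = \tilde{x}_k^T S^{(\eta_{k-1})}\tilde{x}_k$ all check out against the coupled Riccati equations for $S_{(m)}$ and $R_{(m)}$. The only loose end you should close explicitly is the uniform boundedness of $\mathbb{E}[V_k]$ so that the telescoping boundary terms vanish in the time average; this follows from mean-square stability of the jump system $\tilde{x}_{k+1}=(A+\eta_kBL_{(m)})\tilde{x}_k+\eta_kB\Delta u_k$, which is exactly what the finite-cost hypothesis guarantees (and what the paper establishes separately as Lemma~\ref{lem:L0}).

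Your route differs from the paper's. The paper does not split off the watermark by superposition; instead it writes the finite-horizon cost-to-go $V_k(x_k)$ for the \emph{full} watermarked system directly in the form $\mathbb{E}[x_k^TS_kx_k\mid\mathcal{F}_k]+c_k$ or $\mathbb{E}[x_k^TR_kx_k\mid\mathcal{F}_k]+d_k$, derives recursions for the constants $c_k,d_k$ in which the extra $\mbox{tr}((B^TR_{k+1}B+U)\mathcal{Q})$ term appears explicitly, evaluates $\bar{J}_N=\mathbb{E}[V_0(x_0)]$ using the stationary distribution of $\eta_{-1}$, and passes to the limit using convergence of $R_k,S_k$ to $R_{(m)},S_{(m)}$. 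Your superposition step buys a cleaner separation: once cross terms are killed, you only need the Bellman identity on a noiseless, fully observed jump system, so the constants $c_k,d_k$ never appear and the appeal to \cite[Theorem~3]{mo2013lqg} is essentially confined to identifying $J_{(m)}$. The paper's approach, by contrast, handles the baseline cost and the watermark increment in a single dynamic-programming sweep. Both arguments ultimately rest on the same coupled Riccati structure, and both leave the same technical residue (boundedness of second moments) to the finite-cost assumption.
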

\begin{proof}
Consider the cost to go in a finite horizon, $V_k(x_k) \triangleq \sum_{j = k}^N \mathbb{E}\left[ x_j^T W x_j + u_{j,c}^T U u_{j,c}  | \mathcal{F}_k \right]$, and let $u_{N,c} = 0$. Similar to, \cite{mo2013lqg}, it can be shown that 
\begin{equation}
V_k(x_k) = \begin{cases} \mathbb{E}[x_k^T S_k x_k | \mathcal{F}_k] + c_k & (\eta_{k-1} = 0) \\ \mathbb{E}[x_k^T R_k x_k | \mathcal{F}_k] + d_k & (\eta_{k-1} = 1) \end{cases},
\end{equation}
where $c_N = d_N = 0, ~ R_N, S_N = W, \bar{P} = P-KCP$ and
\begin{align}
F &= (A+BL_{(m)}) \nonumber, \\
R_k &= W + \beta A^T S_{k+1} A + \bar{\beta} F^TR_{k+1}F + \bar{\beta}L_{(m)}^T U L_{(m)} \nonumber, \\
S_k &= W + \bar{\alpha} A^T S_{k+1} A + \alpha F^TR_{k+1}F +  \alpha L_{(m)}^T U L_{(m)} \nonumber, \\
c_k &=  -\alpha \mbox{tr}((F^TR_{k+1}F-A^TR_{k+1}A + L_{(m)}^T U L_{(m)})(\bar{P})) \nonumber  \\
    &+ \alpha[\mbox{tr}(R_{k+1}Q) + d_{k+1} + \mbox{tr}((B^TR_{k+1}B+U)\mathcal{Q})] \nonumber \\
    &+ \bar{\alpha}[\mbox{tr}(S_{k+1}Q) + c_{k+1}], \label{eq:ck}  \\
d_k &=   -\bar{\beta} \mbox{tr}((F^TR_{k+1}F-A^TR_{k+1}A + L_{(m)}^T U L_{(m)})(\bar{P})) \nonumber  \\
    &+ \bar{\beta}[\mbox{tr}(R_{k+1}Q) + d_{k+1} + \mbox{tr}((B^TR_{k+1}B+U)\mathcal{Q})] \nonumber \\
    &+  \beta[\mbox{tr}(S_{k+1}Q) + c_{k+1}]. \label{eq:dk}
\end{align}
Let $\bar{J}_N = \mathbb{E} \left[ \sum_{k = 0}^N x_k^TWx_k + u_{k,c}^T U u_{k,c} \right] = \mathbb{E}[V_0(x_0)]$. We find that
\begin{align*}
\bar{J}_N &= P(\eta_{-1} = 0) \left(\mathbb{E}[x_{0}^T S_{0} x_{0}|\eta_{-1}=0] + c_{0}\right) \\ 
& + P(\eta_{-1} = 1) \left(\mathbb{E}[x_{0}^T R_{0} x_{0}|\eta_{-1}=1] + d_{0}\right).
\end{align*}
Leveraging the fact that $\{\eta_k\}$ is stationary with $P(\eta_k = 0) = \frac{\beta}{\alpha + \beta}$ as well as \eqref{eq:ck} and \eqref{eq:dk}, we obtain
\begin{align*}
\bar{J}_N &= \frac{1}{\alpha+\beta} \sum_{k = 0}^{N-1} \bigg(-\alpha \mbox{tr}((F^TR_{k+1}F-A^TR_{k+1}A \nonumber \\ &+ L_{(m)}^T U L_{(m)})(\bar{P})) + \mbox{tr}((\beta S_{k+1}+\alpha R_{k+1})Q) \\ &+ \alpha \mbox{tr}((B^T R_{k+1}B+U)\mathcal{Q}) \bigg) \\ &+ \frac{\beta \mathbb{E}[x_{0}^T S_{0} x_{0}|\eta_{-1}^0] + \alpha \mathbb{E}[x_{0}^T R_{0} x_{0}|\eta_{-1}^1]}{\alpha + \beta}, \nonumber
\end{align*} 
where $\eta_{-1}^j$ refers to the condition $\eta_{-1}=j$. It can be shown (in a similar manner to the proof of Theorem \ref{thm:correlation}) that the last term is bounded. Note $\bar{J} = \lim_{N \rightarrow \infty} \frac{1}{N} \bar{J}_{N-1}$. Moreover, from \cite{mo2013lqg}[Theorem 3, Lemma 4],  $\{S_k\}, \{R_k\}$ converge to $S_{(m)},R_{(m)}$, respectively. This proves the desired result.
\end{proof}
We now compute the expected correlation without attacks.
\begin{theorem} \label{thm:correlation}
Suppose $\alpha$ and $\beta$ are chosen so the resulting system has finite cost $J_{(m)}$ \cite{mo2013lqg}[Theorem 3] in the absence of a Gaussian watermark. Then, for the control system \eqref{eq:openloop} with IID Gaussian and Markovian watermark \eqref{eq:markovian input}, we have 
\begin{equation}
\lim_{k \rightarrow \infty} \mathbb{E}[y_k^Ty_k'|\mathcal{H}_0] = \frac{\mbox{tr}(C(\alpha X_1 + \beta X_0)C^T)}{\alpha + \beta},
\end{equation} 
where
\begin{align}
X_0 &= A(\bar{\alpha}X_0+\alpha X_1) A^T, \label{eq:fixedX0} \\
X_1 &= (A+BL_{(m)})(\beta X_0 + \bar{\beta}X_1)(A+BL_{(m)})^T + B\mathcal{Q}B^T  \nonumber
\end{align}
\end{theorem}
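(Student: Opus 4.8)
The plan is to reduce the correlation $\mathbb{E}[y_k^Ty_k'|\mathcal{H}_0]$ to the steady-state covariance of the virtual state $x_k'$, and then to evaluate that covariance by reading \eqref{eq:virsimple} as a Markov jump linear system driven by the Bernoulli chain $\{\eta_k\}$. Under $\mathcal{H}_0$ the operator sees the true output $y_k = Cx_k + v_k$, while $y_k' = Cx_k'$; since $v_k$ is zero-mean and independent of $x_k'$ (a function of $\{\Delta u_j,\eta_j\}_{j<k}$ only), the $v_k$ cross-term vanishes and
\[
\mathbb{E}[y_k^Ty_k'|\mathcal{H}_0] = \mbox{tr}\!\left(C\,\mathbb{E}[x_k'x_k^T]\,C^T\right).
\]
First I would show $\mathbb{E}[x_k'x_k^T] = \mathbb{E}[x_k'(x_k')^T]$. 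Conditioning on the entire drop path $\{\eta_j\}$ makes the closed loop \eqref{eq:openloop}--\eqref{eq:KalmanEst} linear in its exogenous inputs, so $x_k = a_k + x_k'$, where $a_k$ is the state the plant would have under the same $\{\eta_j\}$ but with $\Delta u\equiv 0$ (hence $a_k$ depends only on $\{x_0,w_{<k},v_{\le k}\}$) and $x_k'$ is precisely the watermark-induced increment satisfying \eqref{eq:virst}--\eqref{eq:virsimple}. Given $\{\eta_j\}$, $a_k$ and $x_k'$ are independent and $\mathbb{E}[x_k'|\{\eta_j\}]=0$, because $x_k'$ is a fixed linear combination of the zero-mean independent $\Delta u_j$'s (the representation used in the proof of Theorem~1); the tower rule then gives $\mathbb{E}[x_k'a_k^T]=0$.

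Next I would compute $\lim_k\mathbb{E}[x_k'(x_k')^T]$ through the mode-conditioned second moments $X_i^{(k)}\triangleq \mathbb{E}[x_k'(x_k')^T|\eta_{k-1}=i]$, $i\in\{0,1\}$. From \eqref{eq:virsimple}, on $\{\eta_{k-1}=0\}$ we have $x_k'=Ax_{k-1}'$ and on $\{\eta_{k-1}=1\}$ we have $x_k'=(A+BL_{(m)})x_{k-1}'+B\Delta u_{k-1}$, with $\Delta u_{k-1}$ independent of $(x_{k-1}',\eta_{k-1})$, zero mean, covariance $\mathcal{Q}$ (so the cross terms drop). To close the recursion I would expand $\mathbb{E}[x_{k-1}'(x_{k-1}')^T|\eta_{k-1}]$ over $\eta_{k-2}$, using that $\eta_{k-1}$ is conditionally independent of $x_{k-1}'$ given $\eta_{k-2}$ (Markov property) and that the stationary two-state chain \eqref{eq:markov} is reversible, so its time-reversed transition probabilities equal the forward ones. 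This yields exactly the coupled Lyapunov recursion whose fixed point is \eqref{eq:fixedX0}. Combining with the stationary law $P(\eta_{k-1}=0)=\tfrac{\beta}{\alpha+\beta}$ gives $\lim_k\mathbb{E}[x_k'(x_k')^T] = \tfrac{\beta}{\alpha+\beta}X_0 + \tfrac{\alpha}{\alpha+\beta}X_1$, and substituting into the trace identity above produces the claimed expression.

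I expect the two justification steps, rather than the algebra, to be the crux. The first is the conditional-independence argument behind $\mathbb{E}[x_k'x_k^T]=\mathbb{E}[x_k'(x_k')^T]$ -- this is exactly where the ``all other randomness is independent of the watermark'' structure is used, mirroring Theorem~1. The second is establishing that the coupled iteration for $(X_0^{(k)},X_1^{(k)})$ converges to a finite limit and that \eqref{eq:fixedX0} admits a unique positive-semidefinite solution; here I would invoke the standing hypothesis that $\alpha,\beta$ are chosen so that $J_{(m)}$ is finite, since finiteness of the Markovian-drop LQG cost is equivalent to mean-square stability of the closed-loop jump system \cite{mo2013lqg}, which is precisely the condition making this linear map on $(X_0,X_1)$ contractive -- so the fixed point is well defined and the transient contributions decay geometrically, which also supplies the boundedness claim invoked in the proof of the preceding theorem.
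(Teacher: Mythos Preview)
Your proposal is correct and lands on the same coupled fixed-point equations, but the organization differs from the paper in two places worth noting. First, you insert the superposition $x_k = a_k + x_k'$ (valid because, given $\{\eta_j\}$, the closed loop is linear and the estimation error $e_k$ is watermark-free) to reduce $\mathbb{E}[x_k'x_k^T]$ to the autocovariance $\mathbb{E}[x_k'(x_k')^T]$ \emph{before} any computation; the paper instead expands $\mathbb{E}[x_{k+1}'x_{k+1}^T\mid\eta_k]$ directly from the product of the $x_{k+1}'$ and $x_{k+1}$ recursions and then kills the $e_k$, $w_k$, $\Delta u_k$ cross terms one by one via independence, only afterwards obtaining the same clean recursion. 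Your reduction is more transparent about \emph{why} everything collapses; the paper's route is more mechanical but avoids having to argue that the decomposition is exact. Second, for convergence and uniqueness of the fixed point you invoke the standard equivalence between finite $J_{(m)}$ and mean-square stability of the Markov jump linear system, whereas the paper proves this as a dedicated lemma (stability of the linear operator $\mathcal{L}_0$ acting on pairs $(X_0,X_1)$) by a monotone comparison with the Riccati-type iteration underlying $J_{(m)}$; your citation is shorter, the paper's argument is self-contained. Your reversibility remark is exactly the paper's Bayes computation of $P(\eta_{k-1}\mid\eta_k)$ under stationarity, and the final averaging over the stationary law of $\eta_{k-1}$ is identical.
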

\begin{proof}
We begin with the Lemma below.
\begin{lemma} \label{lem:L0}
$\forall~M \in \mathbb{R}^{2n \times n}$, $\lim_{k \rightarrow \infty} \mathcal{L}_0^k(M) = 0$ where,
\begin{equation*}
\mathcal{L}_0 \begin{pmatrix} X \\ Y \end{pmatrix} = \begin{bmatrix} A(\bar{\alpha}{X} + {\alpha}{Y})A^T \\  (A+BL_{(m)})(\beta X  + \bar{\beta}Y )(A+BL_{(m)})^T \end{bmatrix}.
\end{equation*}
\end{lemma}
The proof is in the appendix. The closed loop  dynamics are
\begin{align*}
x_{k+1} &= (A+\eta_k BL_{(m)}) x_k - \eta_k BL_{(m)} e_k + w_k + \eta_kB\Delta u_k \\
e_{k+1} &= (A-KCA) e_k + (I-KC) w_k - Kv_{k+1},
\end{align*}
where $e_k = x_k - \hat{x}_{k|k}$.
From \eqref{eq:virsimple}, when $\eta_k = 1$. we obtain
\begin{align*}
&\mathbb{E}[x_{k+1}'x_{k+1}^T|\eta_k = 1]\\
& = (A+BL_{(m)})\mathbb{E}[x_k' x_k^T|\eta_k = 1](A+BL_{(m)})^T  - \\
&  (A+BL_{(m)})(\mathbb{E}[x_k' e_k^T | \eta_k = 1]L_{(m)}^TB^T - \mathbb{E}[x_k'w_k^T|\eta_k = 1]) \\
& + (A+BL_{(m)}) \mathbb{E}[x_k' \Delta u_k^T | \eta_k = 1] B^T  \\
& + B\mathbb{E}[\Delta u_k x_k^T | \eta_k = 1] (A+BL_{(m)})^T  \\
& + B\left(\mathbb{E}[\Delta u_k w_k^T| \eta_k = 1] +  \mathbb{E}[\Delta u_k \Delta u_k^T | \eta_k = 1] B^T \right) \\
& - B \mathbb{E}[\Delta u_k e_k^T | \eta_k = 1](BL_{(m)})^T,
\end{align*}
where we implicitly condition on $\mathcal{H}_0$. $x_k'$ is independent of $\Delta u_k, w_k, e_k$ and $\Delta u_k$ is independent of $x_k, w_k, e_k$. Thus,
\begin{align}
&\mathbb{E}[x_{k+1}'x_{k+1}^T|\eta_k = 1] \label{eq:notfixedx1} \\ 
& = (A+BL_{(m)})\mathbb{E}[x_k' x_k^T|\eta_k = 1](A+BL_{(m)})^T + B\mathcal{Q} B^T. \nonumber  
\end{align}
Next, since the Markov process is stationary and $x_k,x_k'$ and $\eta_k$ are conditionally independent given $\eta_{k-1}$, we observe
\begin{align}
&\mathbb{E}[x_k' x_k^T|\eta_k = 1]  \label{eq:split1} \\ 
&= P(\eta_{k-1} = 1|\eta_k = 1)\mathbb{E}[x_k' x_k^T|\eta_k = 1, \eta_{k-1} = 1] \nonumber \\
&+ P(\eta_{k-1} = 0|\eta_k = 1)\mathbb{E}[x_k' x_k^T|\eta_k = 1, \eta_{k-1} = 0], \nonumber \\
&= \bar{\beta}\mathbb{E}[x_k' x_k^T|\eta_{k-1} = 1] + \beta\mathbb{E}[x_k' x_k^T|\eta_{k-1} = 0]. \nonumber
\end{align}
It can be similarly shown that 
\begin{align}
&\mathbb{E}[x_{k+1}'x_{k+1}^T|\eta_k = 0] = A\mathbb{E}[x_k' x_k^T|\eta_k = 0]A^T.  \label{eq:notfixedx0} \\
&\mathbb{E}[x_k' x_k^T|\eta_k = 0]  \label{eq:split0}   \\
& = \alpha \mathbb{E}[x_k' x_k^T|\eta_{k-1} = 1] + \bar{\alpha}\mathbb{E}[x_k' x_k^T|\eta_{k-1} = 0]. \nonumber
\end{align}
Letting $X_{k,j} = \mathbb{E}[x_{k}'x_{k}^T|\eta_{k-1} = j]$ we have
\begin{equation}
\begin{pmatrix} X_{k+1,0} \\ X_{k+1,1} \end{pmatrix} = \mathcal{L}_0 \begin{pmatrix} X_{k,0} \\ X_{k,1} \end{pmatrix} + \begin{bmatrix} 0 \\ B\mathcal{Q}B^T \end{bmatrix}. \label{eq:L0}
\end{equation}

Since $\mathcal{L}_0$ is stable, $\lim_{k \rightarrow \infty} \mathbb{E}[x_{k}'x_{k}^T|\eta_{k-1} = 0]$ and $\lim_{k \rightarrow \infty} \mathbb{E}[x_k' x_k^T|\eta_{k-1} = 1]$ are  obtained by solving a fixed point equation which has a unique solution $X_0$ and $X_1$. \eqref{eq:fixedX0} immediately follows from \eqref{eq:L0}. Next, we find that
\begin{align}
\lim_{k \rightarrow \infty} \mathbb{E}[x_k' x_k^T] & = P(\eta_{k-1} = 1) X_1 + P(\eta_{k-1} = 0) X_0 , \label{eq:covariance} \\
& = \frac{\alpha X_1 + \beta X_0}{\alpha + \beta} \nonumber.
\end{align}
Finally, we observe that
\begin{equation}
\mathbb{E}[y_k^T y_k'] = \mbox{tr}\left(\mathbb{E}[(y_k'y_k^T)]\right) = \mbox{tr} \left(C\mathbb{E}[x_k'x_k^T]C^T\right). \label{eq:correlation} 
\end{equation}
\end{proof}

Thus, the watermark design problem \eqref{eq:opt} is given by 
\begin{equation}
\begin{aligned}
& \underset{\alpha, \beta, \mathcal{Q}}{\text{maximize}}
& & \frac{\mbox{tr}(C(\alpha X_1 + \beta X_0)C^T)}{\alpha + \beta} \\
& \text{subject to}
& & \begin{pmatrix} X_0 \\ X_1 \end{pmatrix} = \mathcal{L}_0\begin{pmatrix} X_0 \\ X_1 \end{pmatrix} + \begin{bmatrix} 0 \\ B\mathcal{Q}B^T \end{bmatrix}, \\
& & & J_{(m)}(\alpha,\beta) + \mbox{tr}((B^TR_{(m)}B+U)\mathcal{Q}) \leq \delta, \\
& & & 0 < \alpha, \beta \le 1.
\end{aligned} 
\end{equation}
For fixed $\alpha$ and $\beta$, the problem is an efficiently solvable semidefinite program. However, to optimize over $\alpha$ and $\beta$, we have to solve multiple instances of the problem over a finite 2 dimensional space. Ideally a designer will sample the space sufficiently. Note, not all $(\alpha,\beta)$ in $(0,~1] \times (0,~1]$ are feasible as some selections of $\alpha$ and $\beta$ lead to unbounded cost. Likewise, there may be naturally occurring drops which constrain $\alpha$ and $\beta$. For instance, if we add an artificial Markovian drop process on top of a naturally occurring IID drop process with drop probability $p_d$, we know that $\alpha \le (1-p_d), \bar{\beta} \le (1-p_d)$.

\begin{remark}
The optimal design of Watermark 1 requires solving multiple instances of a convex optimization problem with parameters varying over a bounded 2 dimensional space. This will also be true for Watermark 2. A formulation that considers a stationary Gaussian input with a Markovian drop process is nontrivial. Even if analysis can be performed, optimal design will likely require searching over 3 dimensions. This more complicated case is left for future work.
\end{remark}

\section{The Second Watermark Design} \label{watermark2}
We now investigate a watermark consisting of stationary Gaussian noise generated by a HMM \eqref{eq:stationary}  and an IID Bernoulli drop process at the control input with drop probability equal to $p_d$. Again, we design a watermark to address a performance and security trade-off. We wish to solve:
\begin{equation}
\begin{aligned}
& \underset{p_d, A_{\omega}, C_h, \Psi}{\text{maximize}}
& & \lim_{k \rightarrow \infty} \mathbb{E}[y_k^Ty_k'|\mathcal{H}_0] \\
& \text{subject to}
& & \bar{J} \leq \delta, ~~\rho(A_\omega) \le \bar{\rho},\\
& & & 0 \le p_d \le 1. 
\end{aligned} \label{eq:opt2}
\end{equation}

Rather than optimizing over the parameters of the HMM, we instead optimize over the autocovariance functions $\Gamma(d) \triangleq \mathbb{E}[\Delta u_k \Delta u_{k+d}^T]$. For tractable analysis we replace the constraint $\rho(A_\omega) \le \bar{\rho}$ with the following related assumption. \\
\textbf{Assumption 1:} Let $\Gamma(d)$ be an autocovariance function for a Gaussian process generated by an HMM $(A_\omega, C_h, \Psi)$. Then $(A_\omega, C_h, \Psi, \bar{\rho})$ is feasible only if $\tilde{\Gamma}(d) \triangleq \bar{\rho}^{-|d|} \Gamma(d)$ is a autocovariance function of a stationary Gaussian process. 

$\tilde{\Gamma}(d)$ can be potentially realized by an alternate HMM
\begin{equation}
\tilde{\zeta}_{k+1} = (A_{\omega}/\bar{\rho}) \tilde{\zeta}_{k} + \tilde{\psi}_k,~~\Delta \tilde{u}_k = C_h \tilde{\zeta}_{k}, \label{eq:tildeHMM}
\end{equation}
\begin{equation}
\mbox{Cov}(\tilde{\zeta}_0) = A_\omega \mbox{Cov}(\tilde{\zeta}_0)  A_\omega^T + \Psi, 
\end{equation}
\begin{equation}
  \tilde{\psi}_{k} \sim \mathcal{N}(0, \mbox{Cov}(\tilde{\zeta}_0) - A_\omega \mbox{Cov}(\tilde{\zeta}_0)  A_\omega^T/\bar{\rho}^2).
\end{equation}
Note, that if $\rho(A_\omega) > \bar{\rho}$, $\eqref{eq:tildeHMM}$ can not be a stationary process. This HMM can be realized if and only if  $\mbox{Cov}(\tilde{\zeta}_0) - A_\omega \mbox{Cov}(\tilde{\zeta}_0)  A_\omega^T/\bar{\rho}^2$ is positive semidefinite. Intuitively, if $\rho(A_\omega)$ is marginally less than $\bar{\rho}$, there is a larger chance that $\mbox{Cov}(\tilde{\zeta}_0) - A_\omega \mbox{Cov}(\tilde{\zeta}_0)  A_\omega^T/\bar{\rho}^2$ is positive semidefinite.
\begin{remark}
When $\bar{\rho} = 1$, Assumption 1, introduces no relaxation. In fact, the resulting formulation optimizes all stationary Gaussian processes in general. However, in the case $\bar{\rho} = 1$, we will prove that the resulting Gaussian process $\{\Delta u_k\}$ is entirely deterministic except for the initial watermark. A lower parameter $\bar{\rho}$ reduces average performance, but prevents an attacker who learns or guesses the current hidden state from adequately predicting future watermarks.
\end{remark}
We arrive at a relaxed formulation to \eqref{eq:opt2} below.
\begin{theorem} \label{thm:2ndwatermark}
Consider the control system \eqref{eq:openloop} with IID Bernoulli and stationary Gaussian watermark \eqref{eq:stationary}. Suppose $p_d$ is chosen so that the system has finite cost $J_{(b)}$ \cite{mo2013lqg}[Theorem 3] in the absence of a Gaussian watermark. An equivalent formulation to \eqref{eq:opt2} after replacing the constraint $\rho(A_\omega) \le \bar{\rho}$ with Assumption 1 is given by
\begin{equation}
\begin{aligned}
& \underset{\omega,H,p_d}{\text{maximize}}
& & \mbox{tr}(C F_2(\omega, H, p_d) C^T) \\
& \text{subject to}
& & J_{(b)}(p_d) + F_1(\omega, H, p_d) \leq \delta, \\
& & & 0 \leq p_d \leq 1,~~ 0 \leq \omega \leq 0.5, \\
& & & H \in \mathbb{C}^{p \times p},~~ H \succeq 0.
\end{aligned} \label{eq:opt3}
\end{equation}
where 
\begin{align}
&F_2(\omega,H,p_d) = 2\mbox{Re}\left(2\mbox{sym}\left[L_1(M_2 HB^T)\right]+ L_1(BHB^T)\right) \nonumber \\
&F_1(\omega,H,p_d) = \mbox{tr}(U\Theta) + \mbox{tr}((W+\bar{p}_dL_{(b)}^TUL_{(b)})F_2), \nonumber \\
& \Theta(\omega,H,p_d) = 2\mbox{Re}\left( 2\mbox{sym}\left[\bar{p}_d M_1 H \right] + \bar{p}_d H \right), \nonumber \\
&M_2 = \bar{p}_d \bar{\rho} s (A+BL_{(b)})\left[I - s\bar{\rho}(A+ \bar{p}_d BL_{(b)})\right]^{-1}B,  \nonumber \\
&M_1 = \bar{p}_d \bar{\rho} s L_{(b)} \left[ I - s \bar{\rho}(A+ \bar{p}_d BL_{(b)}) \right]^{-1}B, \nonumber \\
&L_1(X) = \bar{p}_d \left((A+BL_{(b)})L_1(X)(A+BL_{(b)})^T + X\right) \nonumber \\ 
&~~~~~~~~~+ p_d AL_1(X)A^T, \nonumber \\
& \mbox{sym}(X) = \frac{X + X^T}{2},~~~ s = \exp(2\pi j \omega),~~~\bar{p}_d = 1-p_d. \nonumber 
\end{align}
There is also an optimal solution $(H_*,\omega_*,{p_d}_*)$ such that $H_* = hh^H$ where $h^H$ denotes the conjugate transpose of $h \in \mathbb{C}^p$. Letting $\mbox{Re}$ and $\mbox{Im}$ be the real and imaginary parts of a matrix/vector, respectively, an optimal $A_{\omega}, C_h, \Psi$ is
\begin{align}
A_{\omega} = \bar{\rho} \begin{bmatrix} \cos(2 \pi \omega_*) & -\sin(2 \pi \omega_*) \\ \sin(2 \pi \omega_*) & \cos(2 \pi \omega_*) \end{bmatrix}, \nonumber \\
C_h = \sqrt{2} \begin{bmatrix} \mbox{Re}(h) & \mbox{Im}(h) \end{bmatrix}, ~~ \Psi = (1 - \bar{\rho}^2)I. \label{eq:optHMM} \
\end{align}
\end{theorem}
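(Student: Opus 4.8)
\textit{Proof strategy.} The plan is to reduce the two infinite–horizon quantities in \eqref{eq:opt2} — the steady–state correlation and the LQG cost $\bar{J}$ — to explicit \emph{linear} functionals of the watermark autocovariance $\Gamma(d)=\mathbb{E}[\Delta u_k\Delta u_{k+d}^T]$, then to use the spectral (Herglotz/Cramér) representation of $\Gamma$ together with Assumption~1 to turn the design problem, for each fixed $p_d$, into a linear program over a cone of positive semidefinite matrix measures, and finally to show that this program is solved by a single–frequency, rank–one spectrum which is realized by the $2\times 2$ rotation HMM \eqref{eq:optHMM}.

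\textit{Step 1 (decoupling).} Since the Kalman filter is unbiased and uses the known $u_{k,c}$, the closed–loop state and the estimate each split additively into a part driven only by $\{w_k,v_k\}$ and a part driven only by $\{\Delta u_k\}$; the coupling enters solely through the shared $\eta_k$, which is independent of both drivers. The noise–driven part reproduces the LQG system without a watermark and contributes exactly $J_{(b)}(p_d)$ to $\bar{J}$. The watermark–driven state obeys \eqref{eq:virsimple}, i.e.\ it is distributed as $x_k'$, and — the virtual system being noiseless — one checks $\hat{x}_{k|k}'=x_k'$. Hence $\lim_k\mathbb{E}[y_k^Ty_k'|\mathcal{H}_0]=\mbox{tr}(C\Sigma'C^T)$ with $\Sigma'=\lim_k\mathbb{E}[x_k'{x_k'}^T]$, and the watermark part of $\bar{J}$ is $\mbox{tr}\big((W+\bar p_d L_{(b)}^TUL_{(b)})\Sigma'\big)+\mbox{tr}(U\Theta)$, where $\Theta$ collects $\Gamma(0)$ and the cross–covariance $\mathbb{E}[x_k'\Delta u_k^T]$.

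\textit{Step 2 (autocovariance form).} Taking expectations in \eqref{eq:virsimple} with i.i.d.\ $\eta_k$ yields the Lyapunov–type fixed point $\Sigma'=p_d A\Sigma'A^T+\bar p_d(A+BL_{(b)})\Sigma'(A+BL_{(b)})^T+(\text{source})$, whose inverse is the operator $L_1$; the source and $\Theta$ depend on $\Gamma$ through $\Gamma(0)$ and through $\mathbb{E}[x_k'\Delta u_k^T]=\bar p_d\sum_{d\ge 1}(A+\bar p_dBL_{(b)})^{d-1}B\,\Gamma(-d)$, using that the expectation of a product of $(A+\eta_iBL_{(b)})$ factors is a power of the mean dynamics $A+\bar p_dBL_{(b)}$. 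Because Assumption~1 forces $\tilde{\Gamma}(d)=\bar{\rho}^{-|d|}\Gamma(d)$ to be a valid autocovariance, the matrix Herglotz/Cramér theorem gives a unique conjugate–symmetric positive semidefinite matrix measure $F$ with $\tilde{\Gamma}(d)=\int_{[0,1)}e^{2\pi j\omega d}\,dF(\omega)$, so $\Gamma(d)=\int_{[0,1)}\bar{\rho}^{|d|}e^{2\pi j\omega d}\,dF(\omega)$. Substituting this and summing the geometric series against $s=\exp(2\pi j\omega)$ (using $\bar{s}=s^{-1}$) produces exactly the closed forms $M_1,M_2,L_1,\Theta$, and shows that $\mbox{tr}(C\Sigma'C^T)$ and the watermark cost are linear functionals of $F$.

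\textit{Step 3 (reduction and realization).} For fixed $p_d$ the problem is now: maximize a linear functional of $F\succeq 0$ subject to the single linear inequality $\bar{J}\le\delta$; since the cost functional dominates the total mass of $F$ (it contains $\bar p_d\,\mbox{tr}(U\Gamma(0))$ with $U\succ 0$), the feasible set is weak-$*$ compact and the maximum is attained at an extreme point. A standard Carathéodory/extreme–point argument (cf.\ \cite{Mo2014}) shows such an extreme point is supported on a single frequency $\omega\in[0,0.5]$ — paired with its mirror $1-\omega$ to keep $\Gamma$ real, which is precisely what yields the $2\,\mbox{Re}(\cdot)$ structure and the range $0\le\omega\le 0.5$ — and that the Hermitian weight there is rank one, $H_*=hh^H$ (the extreme points of $\{H\succeq 0:\mbox{tr}(DH)\le c\}$ are $0$ and the rank–one matrices on the boundary). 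This gives \eqref{eq:opt3}. Finally, for the HMM \eqref{eq:optHMM} one verifies directly that $A_\omega A_\omega^T=\bar{\rho}^2 I$, so $\mbox{Cov}(\zeta_0)=I$ solves the stationarity equation with $\Psi=(1-\bar{\rho}^2)I$, that $\rho(A_\omega)=\bar{\rho}\le\bar{\rho}$, and that its output autocovariance equals $\bar{\rho}^{|d|}\cdot 2\,\mbox{Re}(e^{2\pi j\omega_* d}hh^H)$, i.e.\ the optimal single–frequency $\Gamma$.

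\textit{Main obstacle.} The conceptual crux is Step~3 — the infinite–dimensional extreme–point / Carathéodory argument and the careful treatment of the reality (conjugate–symmetry) constraint on the spectral measure; this is the content shared with the HMM watermark of \cite{Mo2014}. The rest — the additive decoupling of Step~1 and grinding out the geometric sums that yield $M_1,M_2,L_1,\Theta,F_1,F_2$ in Step~2 — is routine linear algebra, the only care needed being the bookkeeping of the factors of $\bar p_d$ and of the conjugates $s,\bar{s}$.
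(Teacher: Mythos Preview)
Your proposal is correct and follows essentially the same route as the paper: decouple the watermark-driven part $\gamma_t=x_t'$ from the noise-driven part, express $\Sigma'=\mathrm{Cov}(\gamma)$ and the incremental cost as linear functionals of the autocovariance $\Gamma$, pass to the spectral measure via Bochner/Herglotz, and then reduce (by linearity in $H$, deferring to \cite{Mo2014} and \cite{Chabukswar2013}) to a single-frequency rank-one weight realized by the rotation HMM \eqref{eq:optHMM}. The only slips are cosmetic --- your cross-covariance should read $\Gamma(d)$ rather than $\Gamma(-d)$, and you should record explicitly (as the paper does in a preliminary lemma) that $A+\bar p_dBL_{(b)}$ is Schur and the operator $\mathcal{L}_1$ is stable, so that the geometric series defining $M_1,M_2$ and the fixed point $L_1$ are well-defined.
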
 

The proof is similar in nature to the proof of Theorem 6 in \cite{Mo2014}. A sketch is found in the appendix. For fixed $p_d$ and $\omega$, the proposed problem is an efficiently solvable semidefinite program. To approximate a global maximum, we solve the problem repeatedly over the space $0 \le \omega \le 0.5$ and $0 \le p_d \le 1$. For sufficiently large $p_d$, the cost $\bar{J}$ becomes infinite in open loop unstable systems \cite{schenato2007foundations}, limiting the feasible space. We can account for natural packet drops in the system as before. For instance, if the input is dropped naturally with probability $p_d'$, we have $p_d' \le p_d \le 1$. 

\begin{remark}
An optimal watermark for a given $p_d \neq {p_d}_*$ may have better detection performance than the globally optimal watermark. Future work aims to use objective functions that better highlight the relative performance of watermarks.
\end{remark}
\begin{remark}
While packet drops at the sensor measurements are not modeled in this paper, our framework could be extended to address this behavior without significantly changing the formulations of the proposed optimization problems. The main effect of packet drops at the sensor side is a time varying Kalman gain. The objective function and increase in cost $\bar{J}$ due to the Gaussian portion of the watermark are not affected by time variations in the Kalman gain in both watermarking settings. Both $J_{(m)}$ and $J_{(b)}$ can be empirically evaluated for fixed $(\alpha,\beta)$ and $p_d$, respectively, to account for packet drops at the sensor measurements.
\end{remark}

\section{Simulations} 
\label{simulations}

In this section, we illustrate the performance of the proposed watermarking designs through extensive numerical results. We tested our watermark designs in various randomly generated systems and, unless otherwise stated, averaged results over 1500 trials. Replay attacks are considered.

In Fig. \ref{fig:rocplots2}, we utilize Watermark 1, which has a Markovian drop process defined by parameters $(\alpha,\beta)$ and an IID Gaussian watermark. The watermark is tested on a randomly generated open loop stable system with 5 states, 4 inputs, and 2 outputs. We plot  the receiver operating characteristic (ROC) curve for both the proposed correlation detector and a $\chi^2$ detector. The $\chi^2$ detector serves as a benchmark, having been previously used for attack detection \cite{Mo2009R,Chabukswar2013,miao2013stochastic,omur_smartgridcomm} in watermarked systems. The threshold $\mu$ is chosen to be a constant multiple of $\lim_{k \rightarrow \infty} E[y_k^Ty_k']$. The ROC curves are collected at multiple different costs $\Delta J=1.05 J^{*}$, $\Delta J=0.45 J^{*}$ and $\Delta J=0.15 J^{*}$. Here, $\Delta J$ represent the increase in the cost $\bar{J}$ relative to optimal cost $J^*$ without drops or a Gaussian watermark. We compare a system with drops $(\alpha = 0.69, \beta = 0.9)$ to a system without drops $(\alpha = 1,\beta = 0)$. The proposed detector outperforms the $\chi^2$ detector in all cases and packet drops improve the ROC curve for both detectors. The improvement appears to be higher for moderately valued $\Delta J$ before saturating. In Fig. \ref{fig:timeplots2}, we plot the expected time to detection for both detectors in a system with Watermark 1. The packet drop process introduces an additional delay in the time to detection though this additional time is less significant as $\Delta J$ is increased.

\begin{figure}[ht]
\centering
\subfloat[Correlation Detector]{%
\includegraphics[scale=0.21]{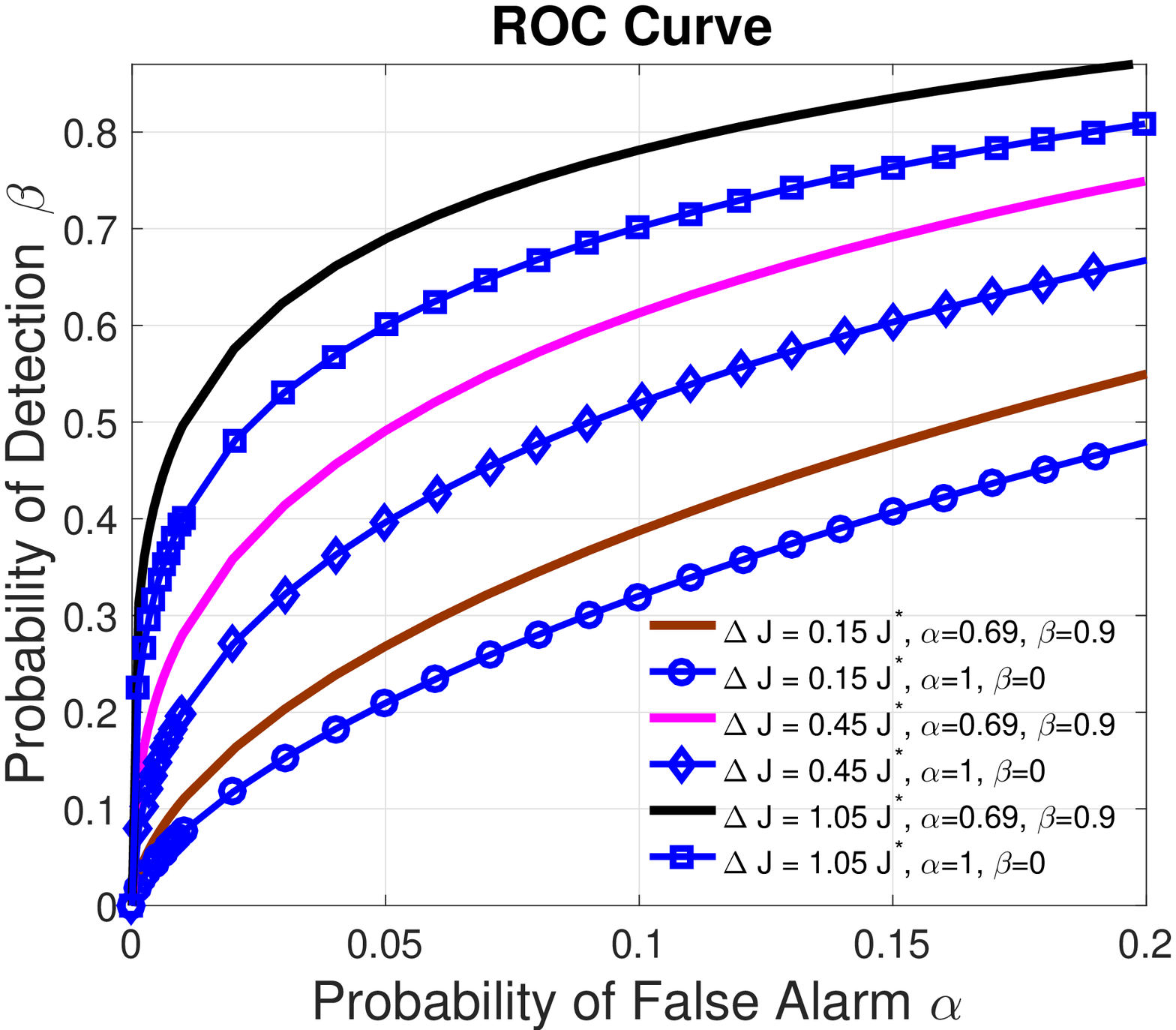}
\label{fig:ROC_3}}
\subfloat[$\chi^2$ Detector]{%
\includegraphics[scale=0.22]{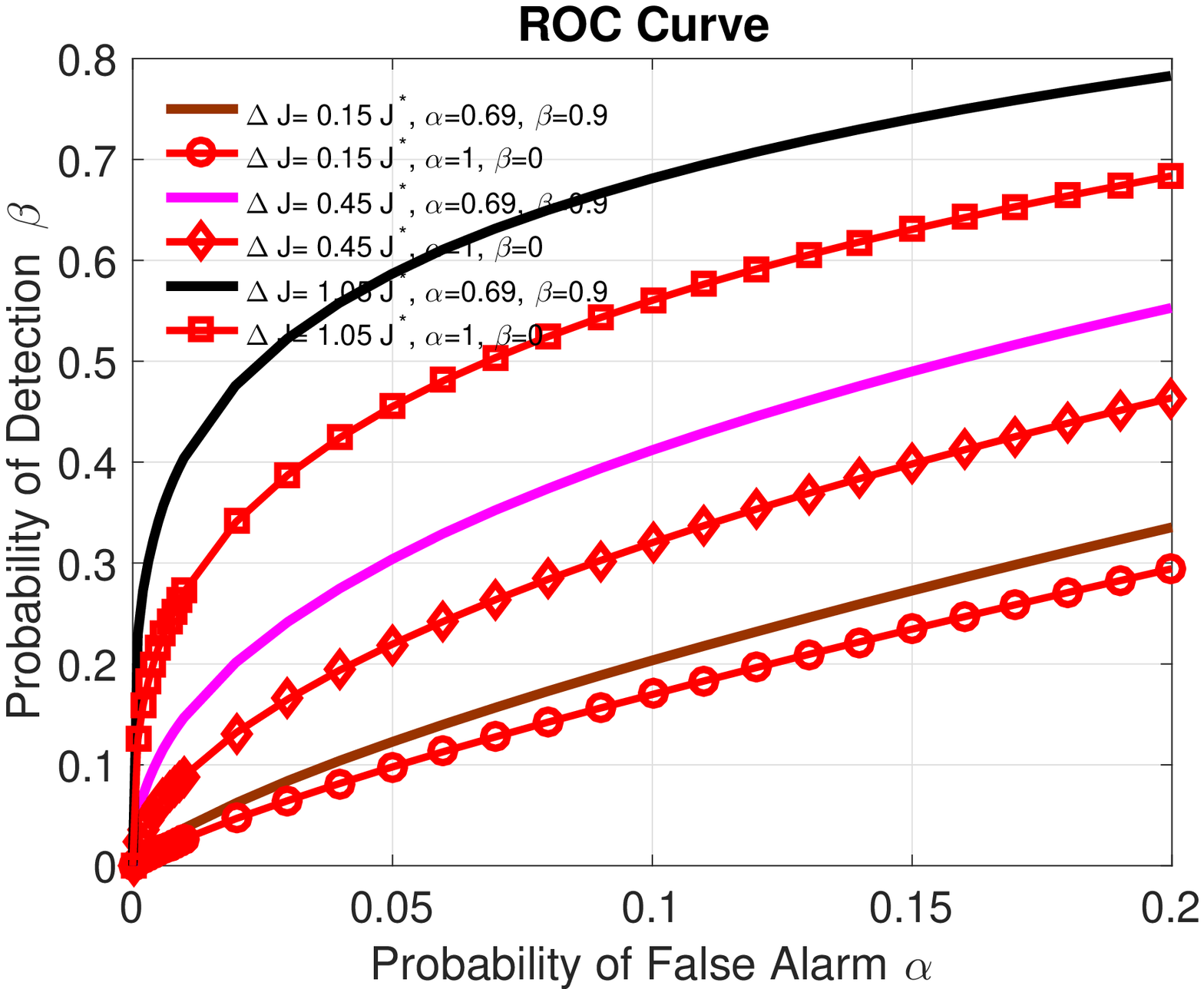}
\label{fig:ROC_4}} \\
\caption{Detection probability versus false alarm rate for $\chi^2$ and correlation detectors for a system using Watermark 1.}
\label{fig:rocplots2}
\end{figure}

\begin{figure}[ht]
\centering
\subfloat[Correlation Detector]{%
\includegraphics[scale=0.22]{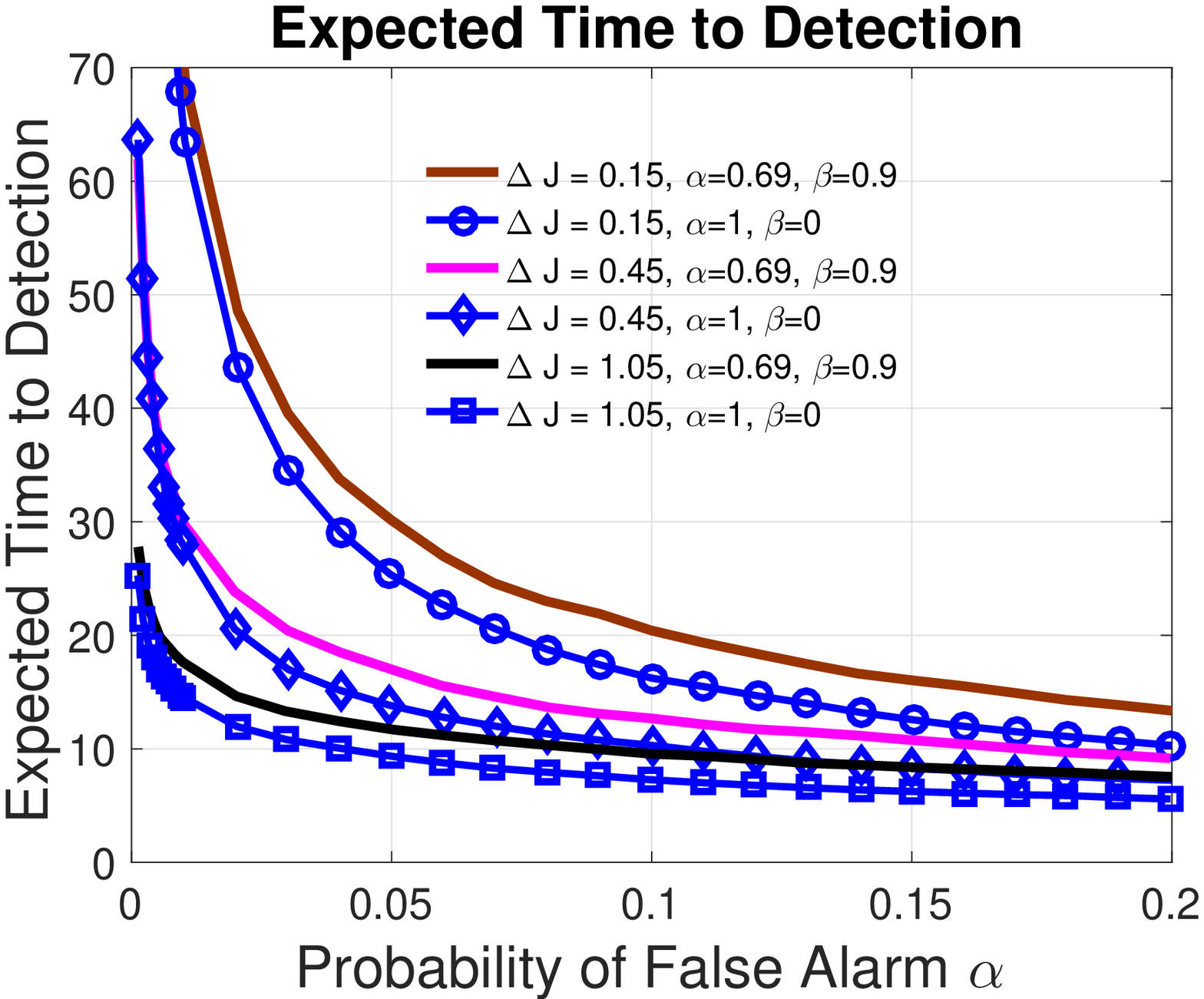}
\label{fig:time_3}}
\subfloat[$\chi^2$ Detector]{%
\includegraphics[scale=0.22]{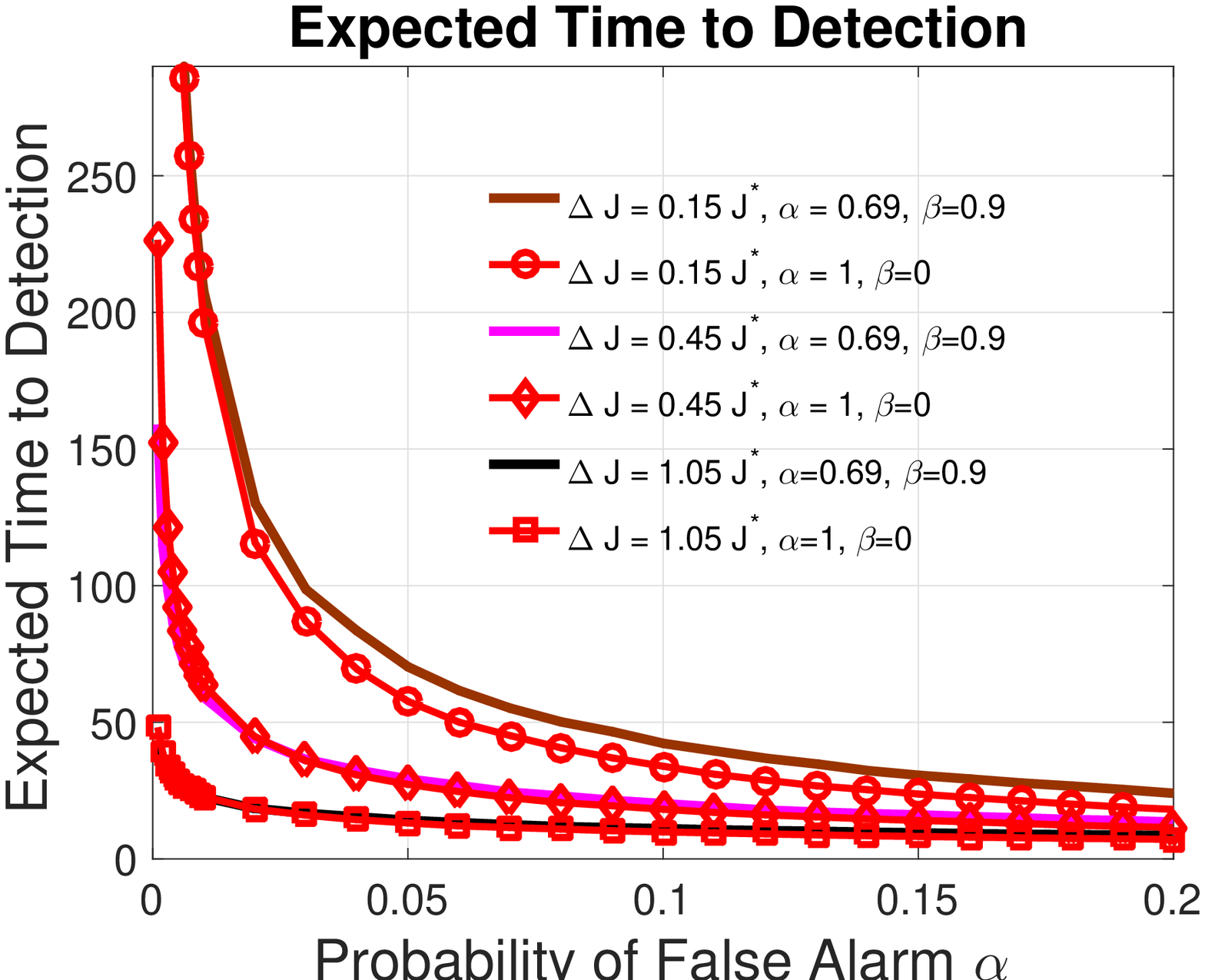}
\label{fig:time_4}} \\
\caption{Expected time to detection for $\chi^2$ and correlation detectors for a system using Watermark 1.}
\label{fig:timeplots2}
\end{figure}

In Fig. \ref{fig:rocplots}, we introduce Watermark 2, which has IID drops (with probability of drop $p_d$) and a stationary Gaussian watermark. The watermark is added to a randomly generated open loop stable system with 6 states, 5 inputs, and 5 outputs. We plot ROC curves generated by both the correlation detector and $\chi^2$ detector for a system with drops $(p_d = 0.6)$ and a system without drops $(p_d = 0)$, at various costs of control $\Delta J = 0.95 J^{*}$, $\Delta J = 0.45 J^{*}$ and $\Delta J = 0.15 J^{*}$. Time to detection plots are provided in Fig. \ref{fig:timeplots}. The results and patterns observed here are similar to the results seen in the system with Watermark 1.

\begin{figure}[ht]
\centering
\subfloat[Correlation Detector]{%
\includegraphics[scale=0.21]{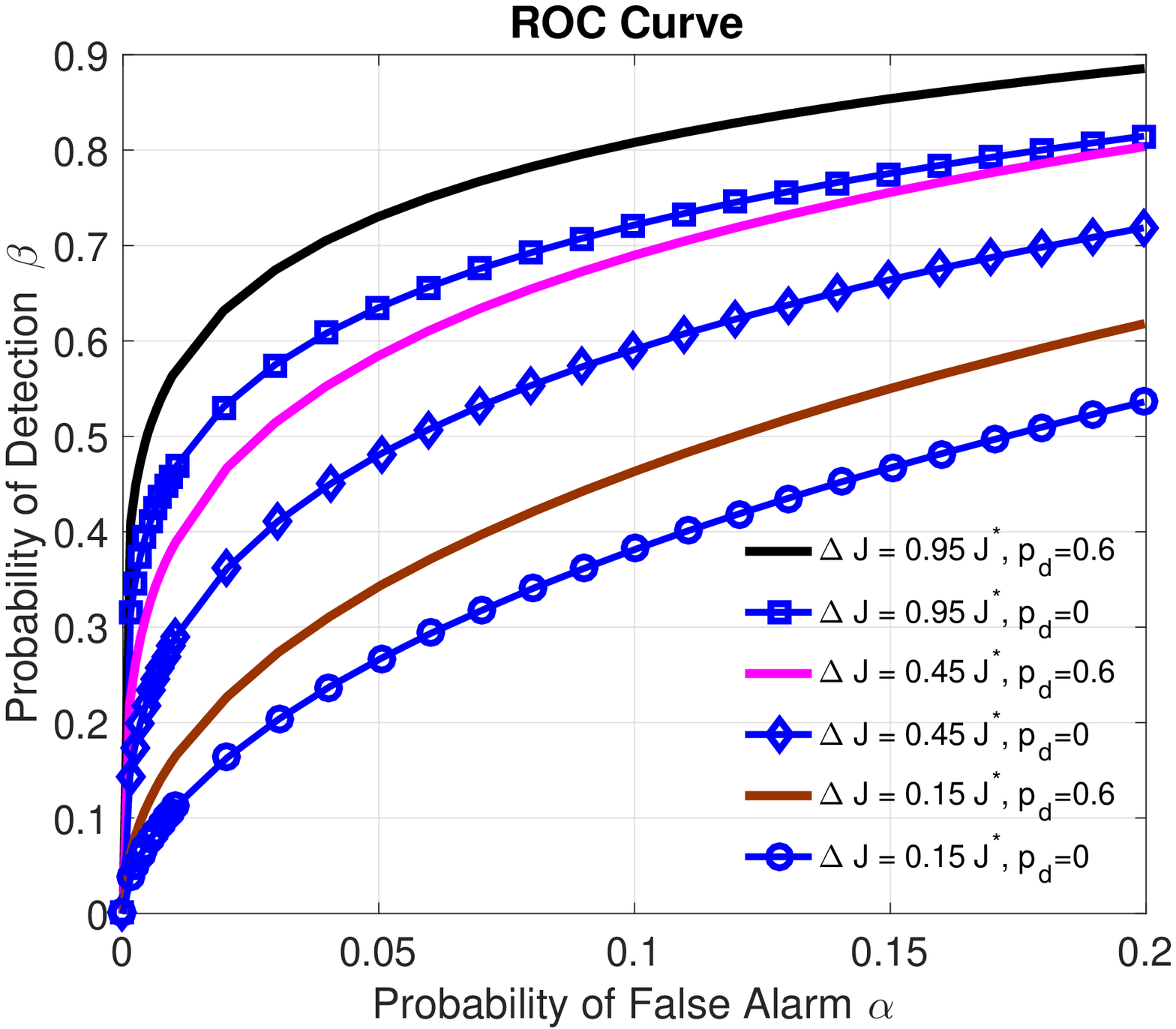}
\label{fig:ROC_1}}
\subfloat[$\chi^2$ Detector]{%
\includegraphics[scale=0.21]{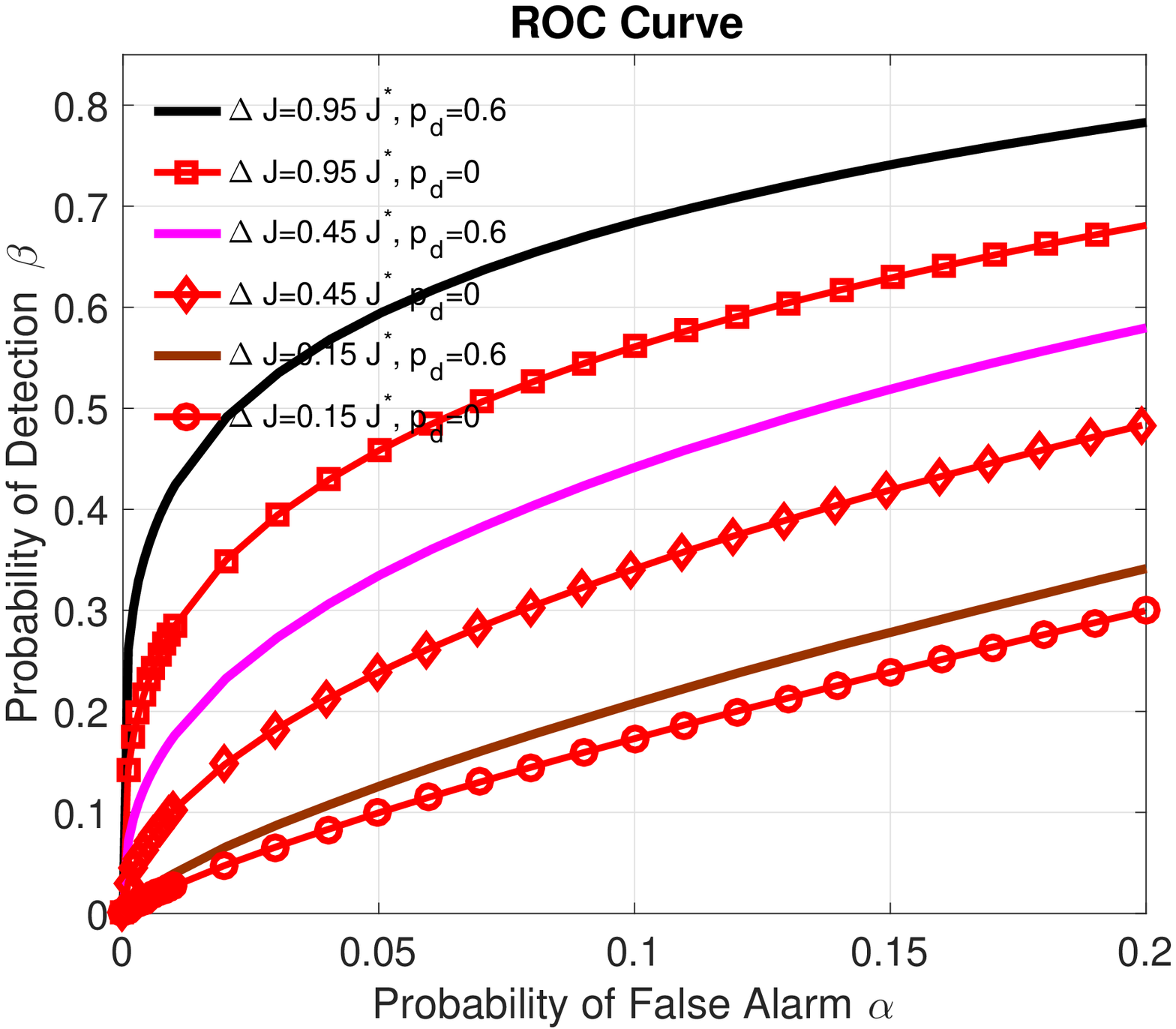}
\label{fig:ROC_2}} \\
\caption{Detection probability versus false alarm rate for $\chi^2$ and correlation detectors for a system using Watermark 2.}
\label{fig:rocplots}
\end{figure}

\begin{figure}[ht]
\centering
\subfloat[Correlation Detector]{%
\includegraphics[scale=0.22]{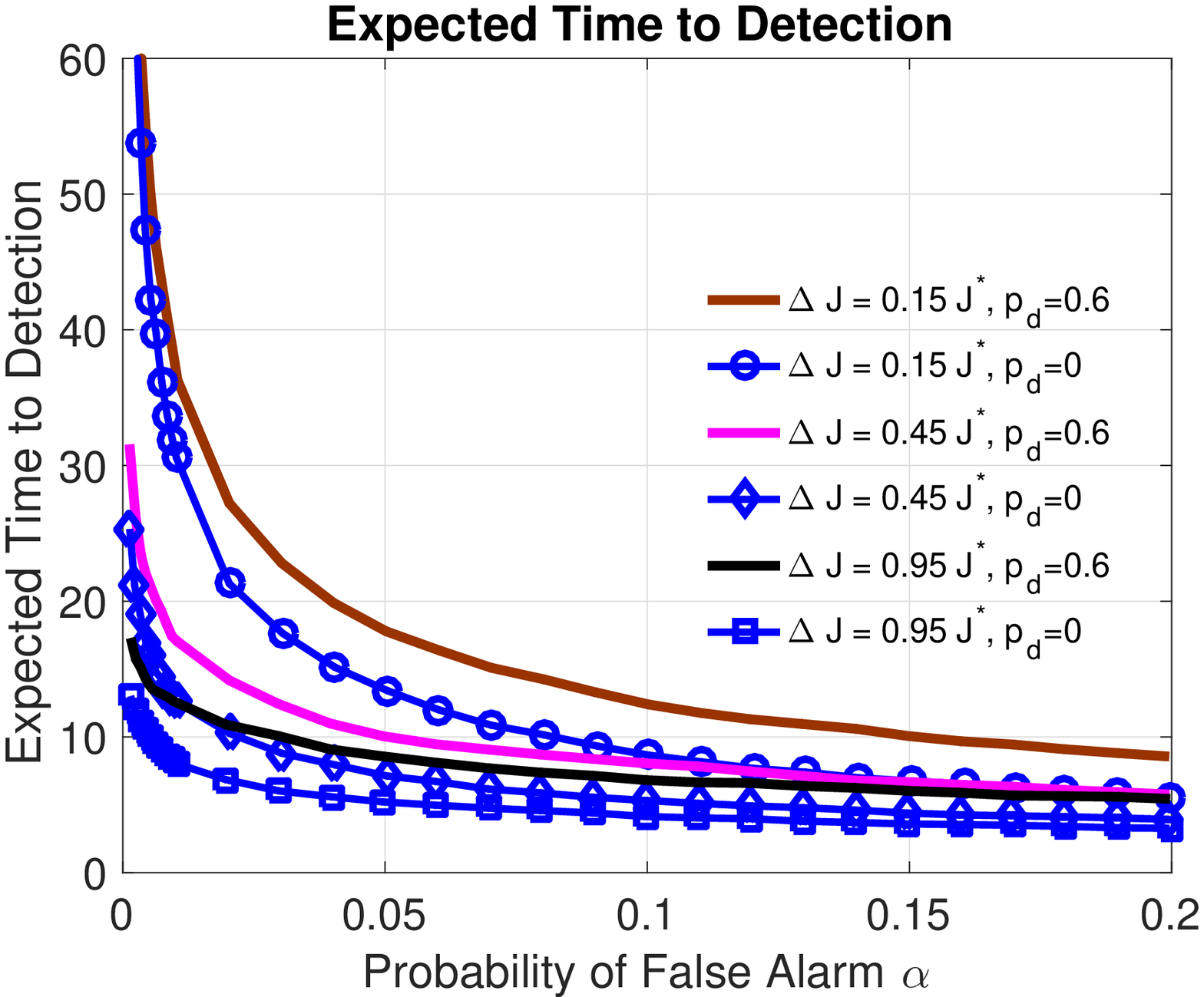}
\label{fig:time_1}}
\subfloat[$\chi^2$ Detector]{%
\includegraphics[scale=0.22]{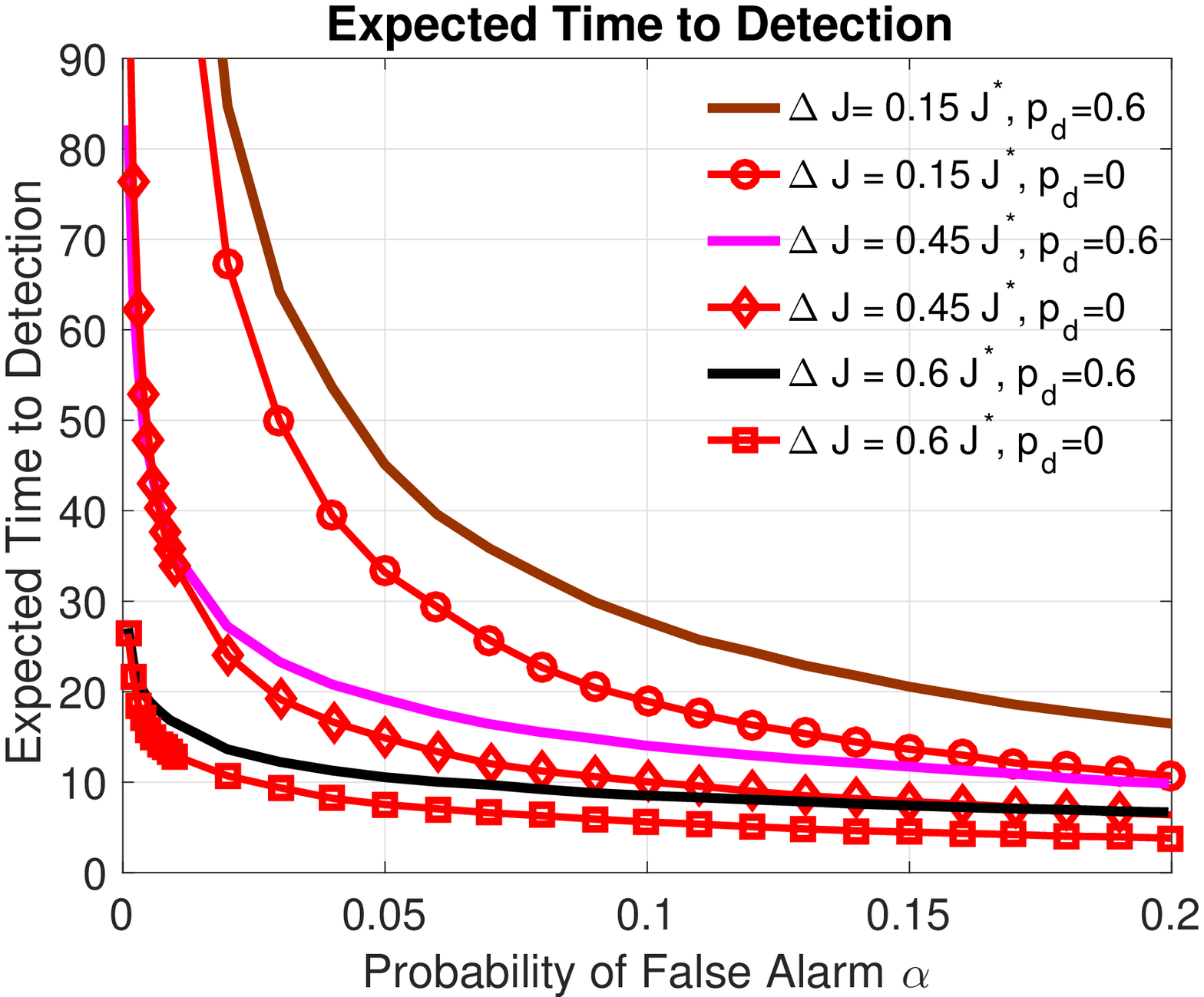}
\label{fig:time_2}} \\
\caption{Expected time to detection for $\chi^2$ and correlation detectors for a system using Watermark 2.}
\label{fig:timeplots}
\end{figure}

In Figs. \ref{fig:faultmarkov} and \ref{fig:faultbernoulli}, we plot $\chi^2$ detector and correlation detector statistics (averaged over 500 trials) during a  fault in the system. The fault introduced (at time $210$) is a constant additive bias added to a subset of sensors (i.e. due to disturbances/sensor drift). While the $\chi^2$ detector raises an alarm, the correlation detector does not since the watermark is preserved in the system. This motivates the use of both the correlation and $\chi^2$ detector to distinguish faults from attacks. If both detectors raise an alarm, indicating the watermark is absent in the outputs, we consider a likely attack scenario. If only the $\chi^2$ detector raises an alarm, we expect that the watermark is preserved while the dynamics are inconsistent with modeling. As such, we anticipate a fault.


%

\begin{figure}[ht]
\centering
\subfloat[Correlation Detector]{%
\includegraphics[scale=0.22]{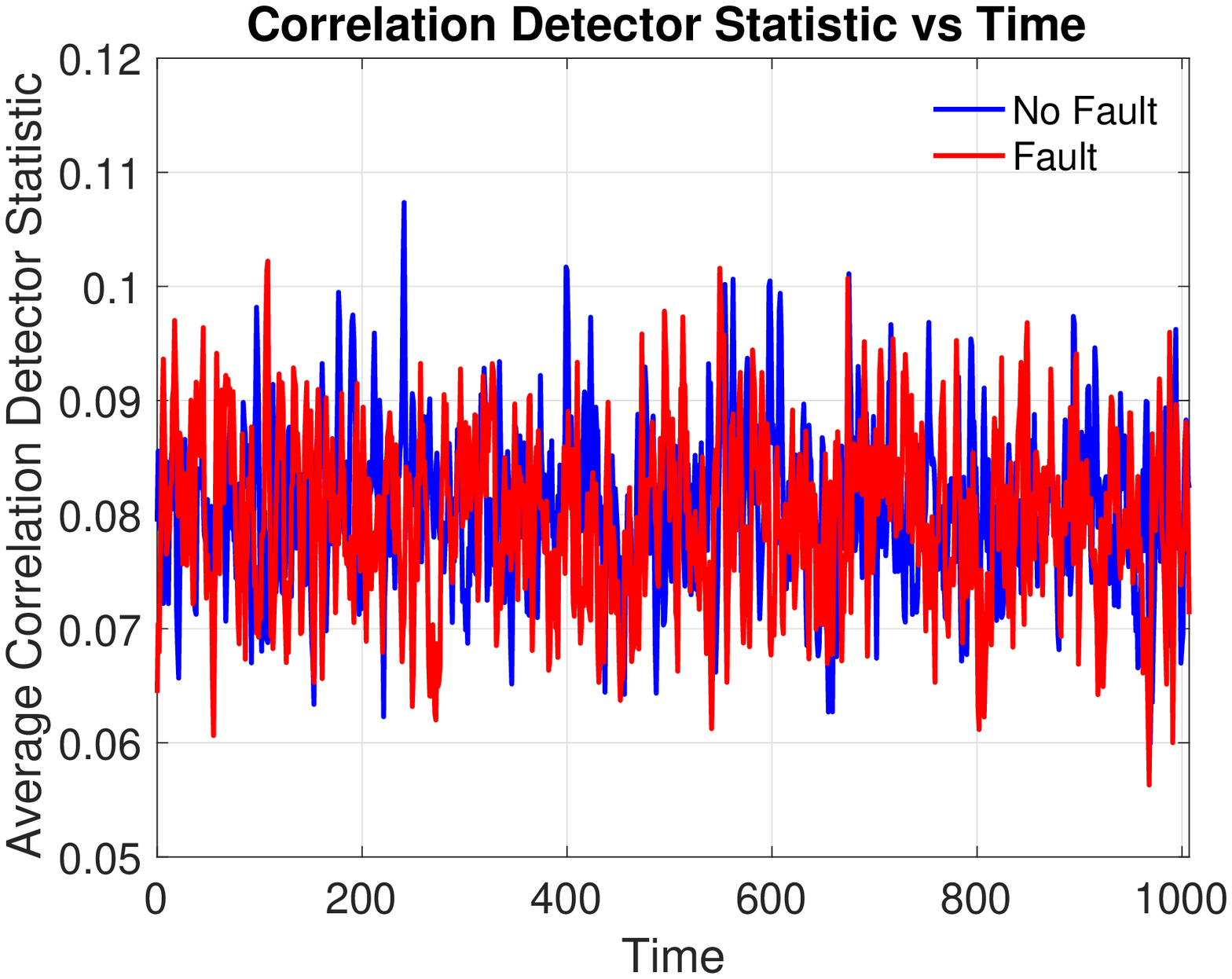}
\label{fig:fault_3}}
\subfloat[$\chi^2$ Detector]{%
\includegraphics[scale=0.22]{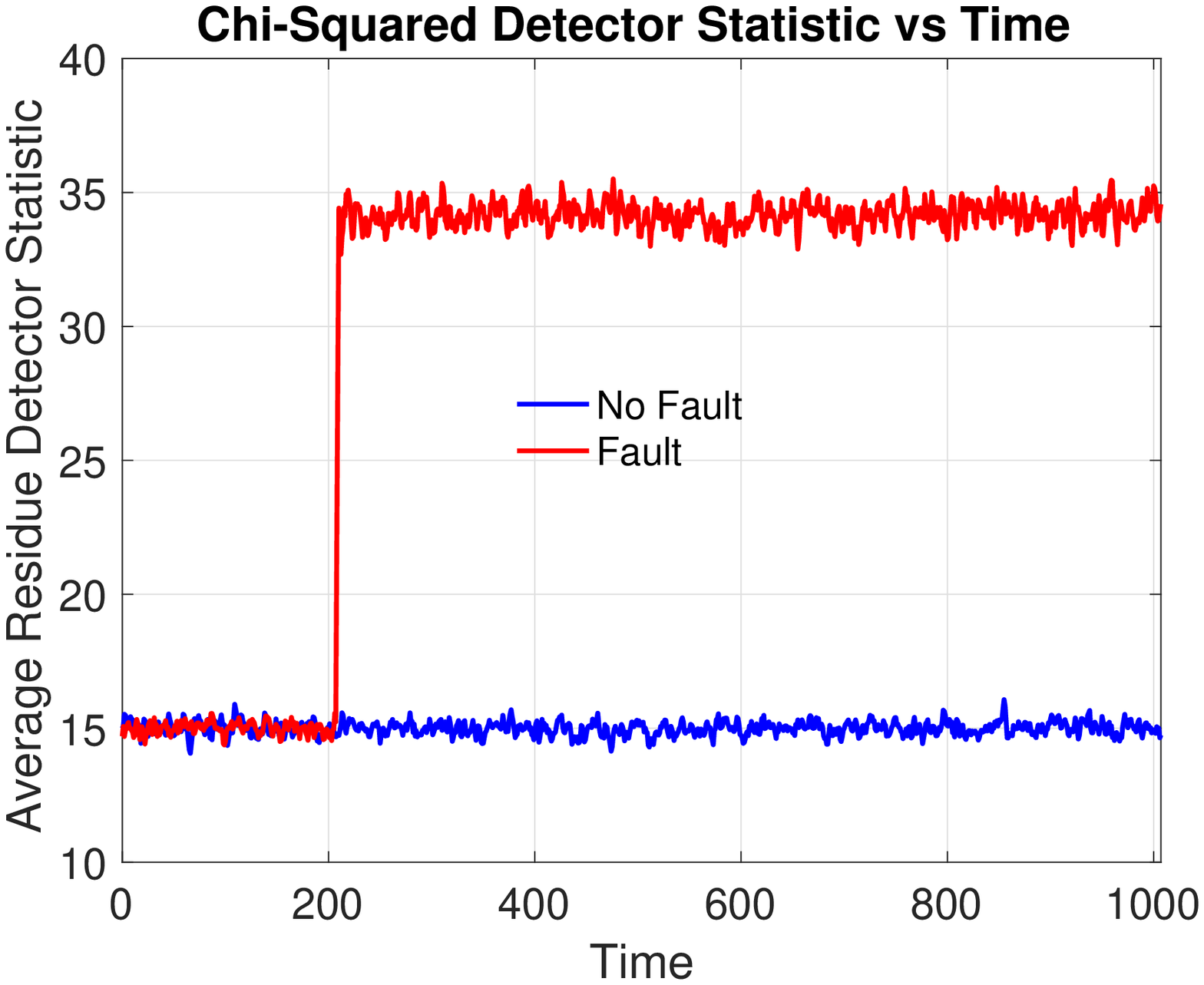}
\label{fig:fault_4}} \\
\caption{Average correlation detector and $\chi^2$ detector statistics under a fault at the sensor output for a system using Watermark 1.}
\label{fig:faultmarkov}
\end{figure}

\begin{figure}[ht]
\centering
\subfloat[Correlation Detector]{%
\includegraphics[scale=0.22]{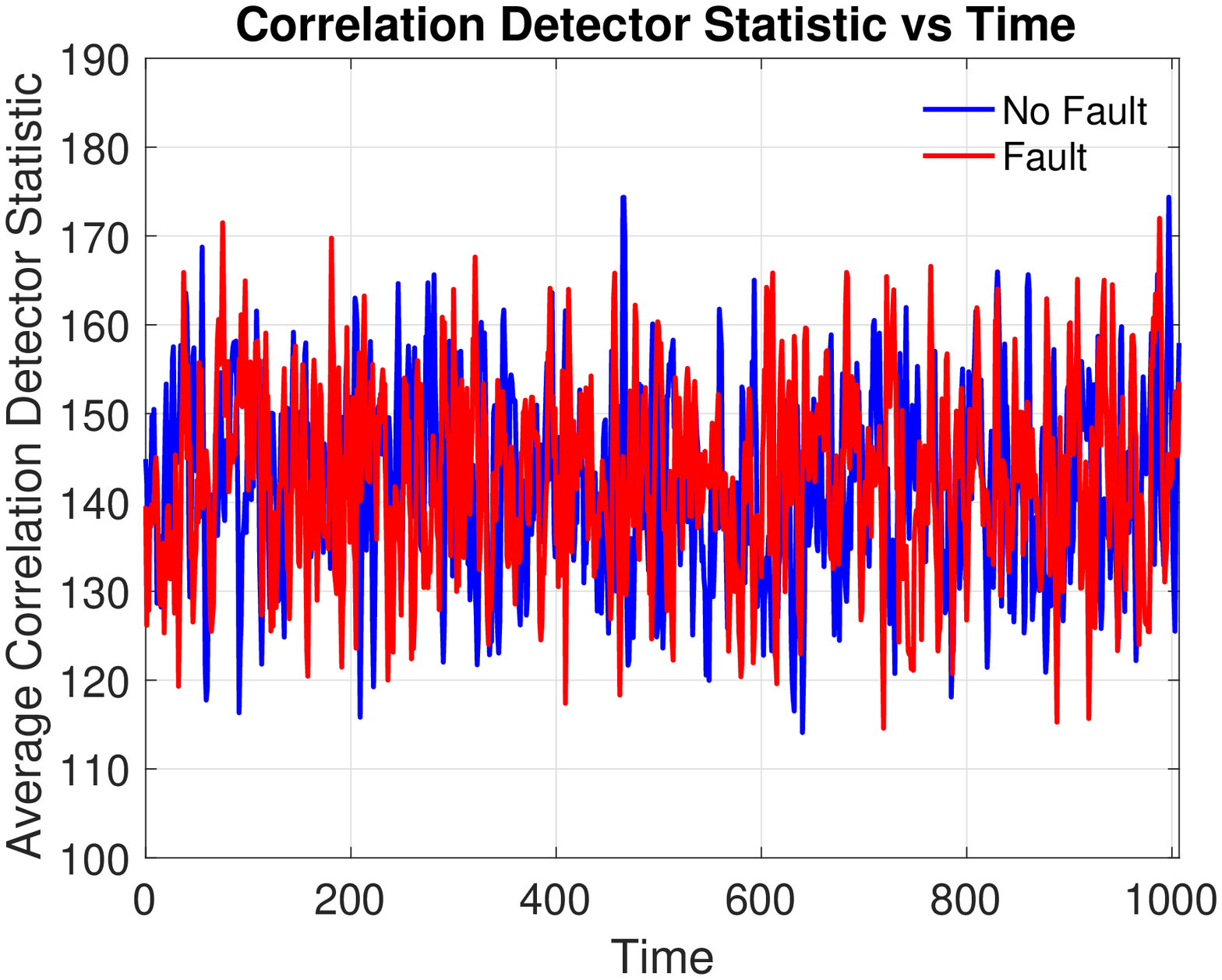}
\label{fig:fault_1}}
\subfloat[$\chi^2$ Detector]{%
\includegraphics[scale=0.22]{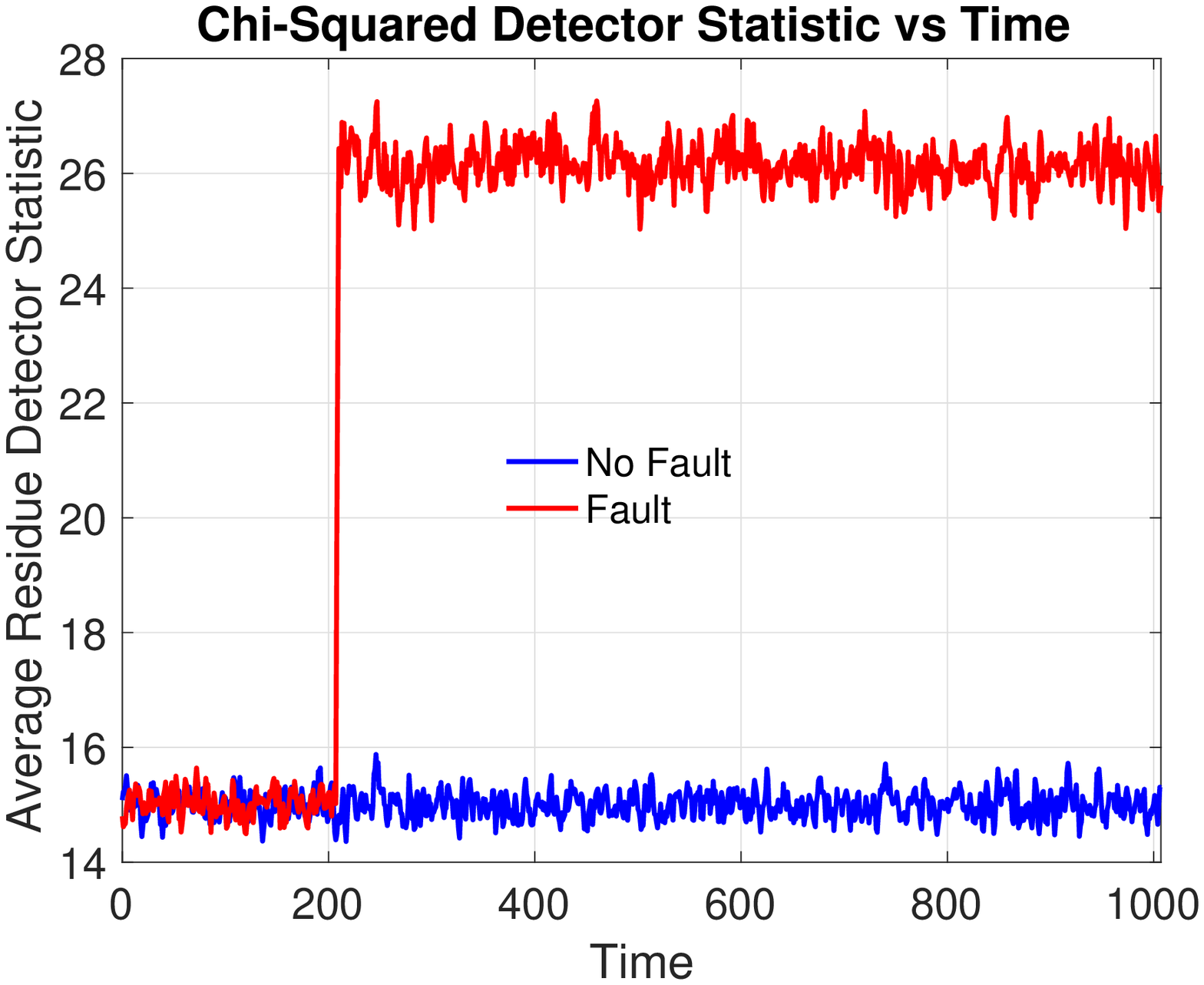}
\label{fig:fault_2}} \\
\caption{Average correlation detector and $\chi^2$ detector statistics under a fault at the sensor output for a system using Watermark 2.}
\label{fig:faultbernoulli}
\end{figure}


\section{Conclusion and Future Work} 
\label{conclusion}
In this paper, we showed how to incorporate Bernoulli packet drops at the control input in the design of physical watermarks. We argued that packet drops can be beneficial for detection and consequently considered the design of a joint Bernoulli-Gaussian watermark to detect integrity attacks. We proposed two main watermark designs in conjunction with a correlation detector and provided efficiently solvable optimization problems to address the trade-off between detection and control performances. In future work we aim to generalize our watermarking approach to allow us to drop either the entire control input or the Gaussian portion of the watermark. We also hope to conduct testing in real systems.

\bibliographystyle{IEEEtran}
\bibliography{identification_cps2} 

\section{Appendix: Proof of Lemma \ref{lem:L0}}
\begin{proof}
We begin with the following Lemma. 
\begin{lemma} \label{lem:zeronorm}
Assume $\{\eta_k\}$ is a stationary Markovian drop process.  Suppose $\alpha>0$ and $\beta>0$ are chosen so that the system has finite cost $J_{(m)}$ \cite{mo2013lqg}[Theorem 3] in the absence of a Gaussian watermark.  Consider $\bar{x}_{k+1} = (A+\eta_k B L_{(m)}) \bar{x}_{k}$ and $\bar{x}_{k+1}' = (A+\eta_k B L_{(m)}) \bar{x}_{k}'$, 
\begin{equation}
\bar{x}_0 = \begin{cases} x_0^0 & \eta_{-1} = 0 \\  x_0^1 & \eta_{-1} = 1 \end{cases}, ~~~~~\bar{x}_0' = \begin{cases} x_0^{0'} & \eta_{-1} = 0 \\  x_0^{1'} & \eta_{-1} = 1 \end{cases}.
\end{equation}
Then, we have
\begin{align*}
\lim_{k \rightarrow \infty} \mathbb{E}[\bar{x}_k^i \bar{x}_k^{j \prime}|\eta_{k-1} = 0] &= 0, ~\lim_{k \rightarrow \infty} \mathbb{E}[\bar{x}_k^i \bar{x}_k^{j \prime}|\eta_{k-1} = 1] = 0, \\
\lim_{k \rightarrow \infty} \mathbb{E}[\bar{x}_k^{i} \bar{x}_k^{j\prime}] &= 0, ~~ \lim_{k \rightarrow \infty} \mathbb{E}[(\bar{x}_k^{i})^2] = 0,
\end{align*}
where $i,j \in \{1,\cdots,n\}$ and $\bar{x}_k^i$ is the $i$th element of $\bar{x}_k$ and $\bar{x}_k^{j \prime}$ is the $j$th element of $\bar{x}_k'$.
\end{lemma}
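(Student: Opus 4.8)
The plan is to reduce all four limits to the single assertion $\mathbb{E}[\|\bar x_k\|^2]\to 0$ (and, by an identical argument, $\mathbb{E}[\|\bar x_k'\|^2]\to 0$), and then to extract that assertion from the finite-cost hypothesis. For the reduction, note first that by stationarity $P(\eta_{k-1}=0)=\frac{\beta}{\alpha+\beta}$ and $P(\eta_{k-1}=1)=\frac{\alpha}{\alpha+\beta}$, both strictly positive since $\alpha,\beta>0$; hence for $\ell\in\{0,1\}$, $\mathbb{E}[\|\bar x_k\|^2\mid\eta_{k-1}=\ell]\le\mathbb{E}[\|\bar x_k\|^2]/P(\eta_{k-1}=\ell)$ and likewise for $\bar x_k'$. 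Second, $\mathbb{E}[(\bar x_k^i)^2\mid\,\cdot\,]\le\mathbb{E}[\|\bar x_k\|^2\mid\,\cdot\,]$, and the conditional Cauchy--Schwarz inequality gives $|\mathbb{E}[\bar x_k^i\bar x_k^{j\prime}\mid\,\cdot\,]|\le(\mathbb{E}[(\bar x_k^i)^2\mid\,\cdot\,])^{1/2}(\mathbb{E}[(\bar x_k^{j\prime})^2\mid\,\cdot\,])^{1/2}$, and the unconditional versions of both inequalities hold as well. Chaining these, each limit in the statement is controlled by a product of factors each dominated by $\mathbb{E}[\|\bar x_k\|^2]$ or $\mathbb{E}[\|\bar x_k'\|^2]$, so it remains only to show these two vanish; and since $\bar x_k'$ obeys the same recursion with an $\eta_{-1}$-measurable (deterministic-per-mode) initial condition, it suffices to treat $\bar x_k$.

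Next I would make the dependence on the mode sequence explicit. Writing $\Phi_k\triangleq(A+\eta_{k-1}BL_{(m)})(A+\eta_{k-2}BL_{(m)})\cdots(A+\eta_0 BL_{(m)})$ for the state-transition matrix of the homogeneous closed loop, we have $\bar x_k=\Phi_k\bar x_0$; and since, conditionally on $\eta_{-1}$, the value $\bar x_0$ is deterministic and independent of $\eta_{0:k-1}$, we get $\mathbb{E}[\|\bar x_k\|^2]=\sum_{j\in\{0,1\}}P(\eta_{-1}=j)\,(x_0^{j})^T\mathbb{E}[\Phi_k^T\Phi_k\mid\eta_{-1}=j]\,x_0^{j}$. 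Therefore it is enough to prove that the homogeneous Markov jump linear system $\bar x_{k+1}=(A+\eta_k BL_{(m)})\bar x_k$ is mean-square stable, i.e.\ that there exist $c>0$ and $\lambda\in(0,1)$ with $\mathbb{E}[\|\Phi_k v\|^2\mid\eta_{-1}=j]\le c\lambda^k\|v\|^2$ for all $v\in\mathbb{R}^n$ and $j\in\{0,1\}$. Note that in the coordinates $(\mathbb{E}[\bar x_k\bar x_k^T\mid\eta_{k-1}=0],\mathbb{E}[\bar x_k\bar x_k^T\mid\eta_{k-1}=1])$ the one-step second-moment recursion of this system is exactly the operator $\mathcal{L}_0$, via the same conditioning identities used in the proof of Theorem~\ref{thm:correlation}; so mean-square stability is the same as $\mathcal{L}_0$ being Schur, which is what we ultimately want.

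The crux, and the only place the hypothesis enters, is therefore: finite $J_{(m)}$ under the gain $L_{(m)}$ implies mean-square stability of $\bar x_{k+1}=(A+\eta_k BL_{(m)})\bar x_k$. For this I would invoke \cite[Theorem~3]{mo2013lqg}: the finite-cost assumption is equivalent there to the convergence of the modified (coupled) Riccati recursion to the fixed point $(R_{(m)},S_{(m)})$, i.e.\ to the closed loop being mean-square stabilized by $L_{(m)}$; a standard Lyapunov/duality argument for Markov jump linear systems then converts this control-side statement into the uniform exponential bound on $\mathbb{E}[\|\Phi_k v\|^2\mid\eta_{-1}=j]$ (the nondegeneracy $Q,R\succ0$ together with the standing controllability/observability assumptions is what rules out the critical case where the second-moment map has spectral radius one). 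Substituting this bound into the expression for $\mathbb{E}[\|\bar x_k\|^2]$ above, and then into the Cauchy--Schwarz/conditioning chain of the first step, delivers all four limits. I expect the main obstacle to be precisely this conversion from the control-side characterization of \cite[Theorem~3]{mo2013lqg} to the trajectory-side mean-square decay; this is also the reason the statement is isolated as a separate lemma rather than folded into the proof of Lemma~\ref{lem:L0}.
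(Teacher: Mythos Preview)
Your reduction step is exactly the paper's: both use stationarity (so $P(\eta_{k-1}=\ell)>0$), the bound $\mathbb{E}[(\bar x_k^i)^2]\le\mathbb{E}[\bar x_k^T\bar x_k]$, and Cauchy--Schwarz to collapse all four limits to the single assertion $\mathbb{E}[\bar x_k^T\bar x_k]\to 0$. Your identification of the crux---converting the finite-cost hypothesis into mean-square stability of the homogeneous jump system $\bar x_{k+1}=(A+\eta_k BL_{(m)})\bar x_k$---is also right, and you correctly flag it as the main obstacle.

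Where your proposal differs from the paper is that you leave this crux as a black box (``invoke \cite[Theorem~3]{mo2013lqg} plus a standard Lyapunov/duality argument''), whereas the paper writes the argument out explicitly, and it is short enough to be worth carrying through. Concretely: pick $\epsilon>0$ with $\epsilon I\le W$ and set $\mathcal J_N=\sum_{k=0}^N\epsilon\,\mathbb{E}[\bar x_k^T\bar x_k]$. A backward induction shows the cost-to-go is $\bar x_k^T\bar S_k\bar x_k$ or $\bar x_k^T\bar R_k\bar x_k$ according to $\eta_{k-1}$, where $(\bar S_k,\bar R_k)$ evolve via the affine maps
\[
g_1(X,Y)=\epsilon I+\bar\alpha A^TXA+\alpha F^TYF,\qquad g_2(X,Y)=\epsilon I+\beta A^TXA+\bar\beta F^TYF,
\]
with $F=A+BL_{(m)}$. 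These are dominated termwise by the maps $h_1,h_2$ obtained by replacing $\epsilon I$ with $W+(\cdot)L_{(m)}^TUL_{(m)}$; the $h$-iterates are monotone and, by \cite[Lemma~4]{mo2013lqg}, convergent under the finite-cost hypothesis. Hence the $g$-iterates are bounded and monotone, so $\mathcal J_N$ converges to a finite limit, forcing $\mathbb{E}[\bar x_k^T\bar x_k]\to 0$. This comparison with the full cost recursion is precisely the ``conversion from control-side to trajectory-side'' you anticipated; it does not require extracting an explicit exponential rate $c\lambda^k$ as you proposed---summability of the second moments already suffices.
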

\begin{proof}
Define $\mathcal{J}_N \triangleq \sum_{k = 0}^N \epsilon \mathbb{E}[ \bar{x}_k^T \bar{x}_k]$ where $\epsilon > 0$ is chosen so $\epsilon I \le W$. Moreover the cost to go function is defined as $\bar{V}_j(\bar{x}_j) \triangleq \sum_{k = j}^N \epsilon \mathbb{E}[ \bar{x}_k^T \bar{x}_k|\eta_{-1:j-1}]$. We show that
\begin{equation}
\bar{V}_k(\bar{x}_k) = \begin{cases} \mathbb{E}[\bar{x}_k^T \bar{S}_k \bar{x}_k|\eta_{-1:k-1}] & (\eta_{k-1} = 0) \\ \mathbb{E}[\bar{x}_k^T \bar{R}_k \bar{x}_k|\eta_{-1:k-1}] & (\eta_{k-1} = 1) \end{cases}, \label{eq:detval}
\end{equation}
where
\begin{align*}
&\bar{S}_k =  g_1(\bar{S}_{k+1},\bar{R}_{k+1}), \bar{R}_k =  g_2(\bar{S}_{k+1},\bar{R}_{k+1}),  \\
 &g_1(X,Y) \triangleq \epsilon I + \bar{\alpha} A^TXA + \alpha F^TYF, \\
 &g_2(X,Y) \triangleq \epsilon I +  \beta A^TXA + \bar{\beta} F^TYF,
\end{align*}
and $F = A+BL_{(m)}$, $\bar{S}_N = \epsilon I$, $\bar{R}_N = \epsilon I$. The proof is by induction. \eqref{eq:detval} holds for $k = N$. Assume it holds for $k = t+1$. We next show \eqref{eq:detval} holds for $k = t$. Conditioned on $\eta_{t-1} = 0$, we have
\begin{align*}
&\bar{V}_t(\bar{x}_t) = \mathbb{E}[\epsilon \bar{x}_t^T\bar{x}_t +  \bar{V}_{t+1}(\bar{x}_{t+1})|\eta_{-1:t-1}], \\
&= \mathbb{E}[\epsilon \bar{x}_t^T{x}_t +  \bar{x}_{t}^T (\bar{\alpha} A^T \bar{S}_{t+1} A + \alpha  F^T \bar{R}_{t+1} F ) \bar{x}_t|\eta_{-1:t-1}], \\
&= \mathbb{E}[  \bar{x}_t^T g_1(\bar{S}_{t+1},\bar{R}_{t+1}) \bar{x}_t|\eta_{-1:t-1}] = \mathbb{E}[\bar{x}_t^T \bar{S}_t \bar{x}_t|\eta_{-1:t-1}].
\end{align*}
The case when $\eta_{t-1} = 1$ is similar. Thus,
\begin{equation}
\mathcal{J}_N  = \mathbb{E}[\bar{V}_0(\bar{x}_0)] = \frac{\beta {x_0^0}^T \bar{S}_0 x_0^{0} + \alpha {x_0^1}^T \bar{R}_0 x_0^{1}}{\alpha + \beta}.
\end{equation}
We claim that $\lim_{N \rightarrow \infty} \mathcal{J}_N$ exists. Consider the sequence
\begin{equation}
\mathcal{S}_{k+1} = g_1(\mathcal{S}_{k},\mathcal{R}_{k}), \mathcal{R}_{k+1} = g_2(\mathcal{S}_{k},\mathcal{R}_{k}), \mathcal{R}_0 = \mathcal{S}_0 = \epsilon I.
\end{equation}
We observe that $g_1(X,Y)$ and $g_2(X,Y)$ are monotonically increasing functions in $(X,Y)$. Because $\mathcal{S}_{1} \ge \mathcal{S}_{0}$ and $\mathcal{R}_{1} \ge \mathcal{R}_{0}$, we see that $\{\mathcal{S}_k\}$ and $\{\mathcal{R}_k\}$ are monotonically increasing in the semidefinite sense. Now consider the sequence
\begin{equation}
\bar{\mathcal{S}}_{k+1} = h_1(\bar{\mathcal{S}}_{k},\bar{\mathcal{R}}_{k}), \bar{\mathcal{R}}_{k+1} = h_2(\bar{\mathcal{S}}_{k},\bar{\mathcal{R}}_{k}), \bar{\mathcal{R}}_0 = \bar{\mathcal{S}}_0 = \epsilon I,
\end{equation}
where we define
\begin{align*} 
 h_1(X,Y) \triangleq W + \alpha L_{(m)}^T U L_{(m)} + \bar{\alpha} A^TXA + \alpha F^TYF, \\
 h_2(X,Y) \triangleq W + \bar{\beta} L_{(m)}^T U L_{(m)} + \beta A^TXA + \bar{\beta} F^TYF.
\end{align*}
Again, we observe that $h_1(X,Y)$ and $h_2(X,Y)$ are monotonically increasing in $(X,Y)$. Because $\bar{\mathcal{S}}_{1} \ge \bar{\mathcal{S}}_{0}$ and $\bar{\mathcal{R}}_{1} \ge \bar{\mathcal{R}}_{0}$, it can be seen that $\{\bar{\mathcal{S}}_k\}$ and $\{\bar{\mathcal{R}}_k\}$ are monotonically increasing in the semidefinite sense. Moreover, due to Lemma 4 in \cite{mo2013lqg}, $\{\bar{\mathcal{S}}_k\}$ and $\{\bar{\mathcal{R}}_k\}$ converge. We observe that if $X \le \bar{X}$ and $Y \le \bar{Y}$, then $g_1(X,Y) \le h_1(\bar{X},\bar{Y})$ and $g_2(X,Y) \le h_2(\bar{X},\bar{Y})$. Since $\mathcal{R}_0 = \bar{\mathcal{R}}_0$ and $\mathcal{S}_0 = \bar{\mathcal{S}}_0$, it can be seen that $\mathcal{S}_k \le \bar{\mathcal{S}}_k$ and $\mathcal{R}_k \le \bar{\mathcal{R}}_k$ for all $k$.

As a result, $\{\mathcal{S}_k\}$ and $\{\mathcal{R}_k\}$ are bounded above by a monotonically increasing, convergent sequence, which in turn means that $\{\mathcal{S}_k\}$ and $\{\mathcal{R}_k\}$ are bounded. From the monotone convergence theorem $\{\mathcal{S}_k\}$ and $\{\mathcal{R}_k\}$ converge to some $\mathcal{S}^*$ and $\mathcal{R}^*$. It is immediately seen that
\begin{equation}
\lim_{N \rightarrow \infty} \mathcal{J}_N = \frac{\beta {x_0^0}^T \mathcal{S}^* x_0^{0} + \alpha {x_0^1}^T \mathcal{R}^*  x_0^{1}}{\alpha + \beta}.
\end{equation}
Note that $\mathcal{J}_N = \sum_{k = 0}^N \epsilon \mathbb{E}[\bar{x}_k^T \bar{x}_k]$. Since $\mathcal{J}_N$ converges to a finite constant, $\lim_{k \rightarrow \infty} \mathbb{E}[\bar{x}_k^T \bar{x}_k] = 0.$  Since $0 \le \mathbb{E}[(\bar{x}_k^i)^2] \le \mathbb{E}[\bar{x}_k^T \bar{x}_k]$, we immediately obtain
\begin{equation}
\lim_{k \rightarrow \infty} \mathbb{E}[(\bar{x}_k^i)^2] = 0.
\end{equation}
By symmetry this also implies that $\lim_{k \rightarrow \infty} \mathbb{E}[(\bar{x}_k^{j\prime})^2] = 0.$ By Cauchy Schwartz, we see
\begin{equation}
0 \le (\mathbb{E}[\bar{x}_k^i\bar{x}_k^{j\prime}])^2 \le  \mathbb{E}[(\bar{x}_k^i)^2]\mathbb{E}[(\bar{x}_k^{j\prime})^2].
\end{equation}
Since $\mathbb{E}[(\bar{x}_k^i)^2]\mathbb{E}[(\bar{x}_k^{j\prime})^2]$ converges to 0, we know $(\mathbb{E}[\bar{x}_k^i\bar{x}_k^{j\prime}])^2$ converges to 0 and thus
\begin{equation}
\lim_{k \rightarrow \infty} \mathbb{E}[\bar{x}_k^i\bar{x}_k^{j\prime}] = 0. \label{eq:uncondlim0}
\end{equation}
Next, we observe that
\begin{equation}
0 \le \frac{\min(\alpha,\beta)}{\alpha+\beta} \mathbb{E}[(\bar{x}_k^i)^2 |\eta_{k-1} = l] \le \mathbb{E}[(\bar{x}_k^i)^2],
\end{equation}
where $l \in \{0,1\}$. As $\alpha,\beta > 0$ by assumption, we know $\lim_{k \rightarrow \infty}  \mathbb{E}[(\bar{x}_k^i)^2 |\eta_{k-1} = l] = 0$. Using the Cauchy Schwartz inequality in a similar manner as before, we see
\begin{equation}
\lim_{k \rightarrow \infty} \mathbb{E}[\bar{x}_k^i\bar{x}_k^{j\prime}|\eta_{k-1} = 1] = 0, \lim_{k \rightarrow \infty} \mathbb{E}[\bar{x}_k^i\bar{x}_k^{j\prime}|\eta_{k-1} = 0] = 0. \label{eq:limcor0}
\end{equation}
\end{proof}
We are now ready to prove the desired result. To this end, we first observe that
\begin{equation}
\begin{pmatrix} \mathbb{E}[\bar{x}_{k+1}' \bar{x}_{k+1}^T|\eta_k = 0]  \\  \mathbb{E}[\bar{x}_{k+1}' \bar{x}_{k+1}^T|\eta_k = 1] \end{pmatrix} = \mathcal{L}_0 \begin{pmatrix} \mathbb{E}[\bar{x}_{k}' \bar{x}_{k}^T|\eta_{k-1} = 0]  \\  \mathbb{E}[\bar{x}_{k}' \bar{x}_{k}^T|\eta_{k-1} = 1] \end{pmatrix}.
\end{equation}
As a result,
\begin{equation}
 \begin{pmatrix} \mathbb{E}[\bar{x}_{k}' \bar{x}_{k}^T|\eta_{k-1} = 0]  \\  \mathbb{E}[\bar{x}_{k}' \bar{x}_{k}^T|\eta_{k-1} = 1] \end{pmatrix} = \mathcal{L}_0^k \begin{pmatrix} x_0^{0'} x_0^{0~T}  \\  x_0^{1'} x_0^{1~T}  \end{pmatrix}.
\end{equation}
Leveraging \eqref{eq:limcor0}, we see that $\lim_{k \rightarrow \infty} \mathbb{E}[\bar{x}_{k}' \bar{x}_{k}^T| \eta_{k-1} = l] = 0$ for $l \in \{0,1\}$. Consequently, we have
\begin{equation}
\lim_{k \rightarrow \infty} \mathcal{L}_0^k \begin{pmatrix} x_0^{0'} x_0^{0~T}  \\  x_0^{1'} x_0^{1~T}  \end{pmatrix}  = 0. \label{eq:lim0}
\end{equation}
Note that $x_0^{0'}, x_0^0, x_0^{1'},$ and $x_0^1$ can be chosen so that $\mathcal{L}_ 0^k$ is applied to an arbitrary canonical basis vector in $\mathbb{R}^{2n \times n}$. Thus, for all $M \in \mathbb{R}^{2n \times n}$, $\lim_{k \rightarrow \infty} \mathcal{L}_0^k(M) = 0$. Thus, $\mathcal{L}_0$ is stable.
\end{proof}

\section{Appendix: Proof of Theorem \ref{thm:2ndwatermark}}
\begin{proof}
We begin with the following Lemma. 
\begin{lemma}
Suppose $p_d$ is chosen so the system with IID drops has finite cost $J_{(b)}$ \cite{mo2013lqg}[Theorem 3]. Then the matrix $(A+\bar{p}_dBL_{(b)})$ is Schur stable. Moreover, the operator $\mathcal{L}_1(X) \triangleq \bar{p}_d(A+BL_{(b)})X(A+BL_{(b)})^T + p_dAXA^T$ is stable. Specifically, $\forall M \in \mathbb{R}^{n \times n}$, we have $\lim_{k \rightarrow \infty} \mathcal{L}_1^k(M) = 0$
\end{lemma}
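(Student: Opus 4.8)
The plan is to reduce both assertions to a single discrete Lyapunov inequality for the adjoint operator $\mathcal{L}_1^{*}(X) \triangleq \bar{p}_d (A+BL_{(b)})^{T} X (A+BL_{(b)}) + p_d A^{T} X A$, using the steady-state matrix $S_{(b)}$, which exists by the finite-cost assumption. First I would record the facts I will use: $S_{(b)}$ solves $S_{(b)} = A^{T}S_{(b)}A + W + \bar{p}_d A^{T}S_{(b)}B L_{(b)}$ with $L_{(b)} = -(B^{T}S_{(b)}B+U)^{-1}B^{T}S_{(b)}A$, and $S_{(b)} \succeq 0$ since it is a steady-state cost-to-go matrix; the identity derived below will upgrade this to $S_{(b)} \succeq W \succ 0$.

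The crux is the algebraic identity $\mathcal{L}_1^{*}(S_{(b)}) = S_{(b)} - W - \bar{p}_d L_{(b)}^{T}U L_{(b)}$. To obtain it I would complete the square in the usual LQR fashion: writing $F = A+BL_{(b)}$ and $\Xi = (B^{T}S_{(b)}B+U)^{-1}$, expand $F^{T}S_{(b)}F$ and use $(B^{T}S_{(b)}B+U)L_{(b)} = -B^{T}S_{(b)}A$ to cancel the cross terms, yielding $F^{T}S_{(b)}F = A^{T}S_{(b)}A - A^{T}S_{(b)}B\Xi B^{T}S_{(b)}A - L_{(b)}^{T}U L_{(b)}$. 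Substituting into $\mathcal{L}_1^{*}(S_{(b)}) = \bar{p}_d F^{T}S_{(b)}F + p_d A^{T}S_{(b)}A$ and comparing with the fixed-point equation rewritten as $S_{(b)} = A^{T}S_{(b)}A + W - \bar{p}_d A^{T}S_{(b)}B\Xi B^{T}S_{(b)}A$ gives the identity. This bookkeeping step — tracking the $\bar{p}_d$ weighting and verifying the cross terms cancel — is the one place that needs care; the rest is soft.

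Given the identity, $\mathcal{L}_1^{*}(S_{(b)}) \preceq S_{(b)} - W \prec S_{(b)}$. Then for any $M \succeq 0$ put $a_k \triangleq \mbox{tr}(S_{(b)}\mathcal{L}_1^{k}(M)) \ge 0$; since $\mbox{tr}(S_{(b)}\mathcal{L}_1^{k+1}(M)) = \mbox{tr}(\mathcal{L}_1^{*}(S_{(b)})\mathcal{L}_1^{k}(M))$, we get $a_{k+1} \le a_k - \mbox{tr}(W\mathcal{L}_1^{k}(M)) \le (1 - \lambda_{\min}(W)/\lambda_{\max}(S_{(b)}))\,a_k$, so $a_k \to 0$ geometrically, which forces $\mathcal{L}_1^{k}(M) \to 0$ because $S_{(b)} \succ 0$; by linearity (write any $M$ as a difference of positive semidefinite matrices) $\mathcal{L}_1^{k}(M) \to 0$ for every $M$, i.e. $\mathcal{L}_1$ is stable. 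For the Schur claim, note that $\mathcal{L}_1$ is precisely the second-moment propagator of $\bar{x}_{k+1} = (A+\eta_k BL_{(b)})\bar{x}_k$ with $\{\eta_k\}$ IID Bernoulli with mean $\bar{p}_d$: since $\eta_k^{2}=\eta_k$ and $\eta_k$ is independent of $\bar{x}_k$, one checks $\mathbb{E}[\bar{x}_{k+1}\bar{x}_{k+1}^{T}] = \mathcal{L}_1(\mathbb{E}[\bar{x}_k\bar{x}_k^{T}])$. Taking $\bar{x}_0 = v$ deterministic gives $\mathbb{E}[\bar{x}_k] = (A+\bar{p}_d BL_{(b)})^{k}v$, and by Jensen $\|(A+\bar{p}_d BL_{(b)})^{k}v\|^{2} \le \mathbb{E}\|\bar{x}_k\|^{2} = \mbox{tr}(\mathcal{L}_1^{k}(vv^{T})) \to 0$; as $v$ is arbitrary, $A+\bar{p}_d BL_{(b)}$ is Schur stable. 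An alternative, parallel to the proof of Lemma~\ref{lem:zeronorm}, is to bound the monotone cost-to-go sequence $\bar{S}_k = \epsilon I + \mathcal{L}_1^{*}(\bar{S}_{k+1})$ above by the convergent finite-horizon LQG-with-drops cost matrices and invoke monotone convergence; the Lyapunov route above is shorter.
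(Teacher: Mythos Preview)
Your Lyapunov route is correct and is genuinely different from the paper's. The paper does not build a Lyapunov inequality from $S_{(b)}$; instead it observes that IID drops are a special case of the Markovian setting and imports Lemma~\ref{lem:zeronorm} wholesale. Concretely, the paper introduces the coupled trajectories $\bar x_{k+1}=(A+\eta_kBL_{(b)})\bar x_k$ and $\bar x_{k+1}'=(A+\eta_kBL_{(b)})\bar x_k'$, notes $\mathbb{E}[\bar x_k'\bar x_k^{T}]=\mathcal{L}_1^{k}(x_{0,*}'x_{0,*}^{T})$, and then simply cites the second-moment decay $\mathbb{E}[(\bar x_k^{i})^{2}]\to 0$ and its Cauchy--Schwarz consequence $\mathbb{E}[\bar x_k^{i}\bar x_k^{j\prime}]\to 0$ from Lemma~\ref{lem:zeronorm} to conclude both claims (the Schur part via $(\mathbb{E}[\bar x_k^{i}])^{2}\le \mathbb{E}[(\bar x_k^{i})^{2}]$, exactly your Jensen step). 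Your approach is more self-contained and delivers a geometric rate for free, at the price of verifying the identity $\mathcal{L}_1^{*}(S_{(b)})=S_{(b)}-W-\bar p_d L_{(b)}^{T}UL_{(b)}$; the paper's approach reuses the monotone cost-to-go comparison already built for the Markovian lemma and avoids that algebra. Your closing sentence about the alternative via monotone convergence is essentially the paper's actual argument.

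One small slip: ``write any $M$ as a difference of positive semidefinite matrices'' handles only symmetric $M$, while the lemma claims $\mathcal{L}_1^{k}(M)\to 0$ for every $M\in\mathbb{R}^{n\times n}$. The patch is immediate with the probabilistic picture you already set up: for $M=uv^{T}$ take $\bar x_0=v$, $\bar x_0'=u$ driven by the same drop sequence, so $\mathbb{E}[\bar x_k'\bar x_k^{T}]=\mathcal{L}_1^{k}(uv^{T})$, and Cauchy--Schwarz gives $|(\mathcal{L}_1^{k}(uv^{T}))_{ij}|\le \sqrt{(\mathcal{L}_1^{k}(uu^{T}))_{ii}\,(\mathcal{L}_1^{k}(vv^{T}))_{jj}}\to 0$; linearity then covers all $M$. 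This is precisely the device the paper uses.
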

\begin{proof}
Consider the systems $\bar{x}_{k+1} = (A+ \eta_k BL_{(b)}) \bar{x}_k$, and $\bar{x}_{k+1}' = (A+ \eta_k BL_{(b)}) \bar{x}_k'$. where $\eta_k$ is an IID drop process with drop probability $p_d$ and $\bar{x}_0 = x_{0,*}, \bar{x}_0' = x_{0,*}'$. Observe that
\begin{equation}
\mathbb{E}[\bar{x}_k] = (A+ \bar{p}_d BL_{(b)})^k x_{0,*}.
\end{equation}
Noting that the IID drop case is a special instance of Markovian drops, we know from Lemma \ref{lem:zeronorm} that $\lim_{k \rightarrow \infty} \mathbb{E}[(\bar{x}_k^{i})^2] = 0$. Using the fact that $\mathbb{E}[(\bar{x}_k^{i})^2] \ge (\mathbb{E}[\bar{x}_k^{i}])^2 \ge 0$, we have $\lim_{k \rightarrow \infty} \mathbb{E}[\bar{x}_k] = 0$. As a result, for all $x_{0,*} \in \mathbb{R}^n$
\begin{equation}
\lim_{k \rightarrow \infty} (A+ \bar{p}_d BL_{(b)})^k x_{0,*} = 0.
\end{equation}
Thus, $(A+\bar{p}_dBL_{(b)})$ is Schur stable. Next, we note that
\begin{equation}
\mathbb{E}[\bar{x}_{k+1}'\bar{x}_{k+1}^T] = \mathcal{L}_1(\mathbb{E}[\bar{x}_{k}'\bar{x}_{k}^T]).
\end{equation}
As a result,
\begin{equation}
\mathbb{E}[\bar{x}_{k}'\bar{x}_{k}^T] = \mathcal{L}_1^k(x_{0,*}'x_{0,*}^T).
\end{equation}
Leveraging \eqref{eq:uncondlim0}, we note $\lim_{k \rightarrow \infty}  \mathbb{E}[\bar{x}_{k}' \bar{x}_{k}^T] = 0$ and this implies
\begin{equation}
\lim_{k \rightarrow \infty}  \mathcal{L}_1^k(x_{0,*}'x_{0,*}^T)  = 0. \label{eq:lim1}
\end{equation}
Note that $x_{0,*}'$ and $x_{0,*}$ can be chosen so that $\mathcal{L}_ 1^k$ is applied to an arbitrary canonical basis vector in $\mathbb{R}^{n \times n}$. Thus, for all $M \in \mathbb{R}^{n \times n}$, $\lim_{k \rightarrow \infty} \mathcal{L}_1^k(M) = 0$. Thus, $\mathcal{L}_1$ is stable.
\end{proof}
We now proceed to the main proof. We obtain an equivalent realization to \eqref{eq:opt2} by using autocovariance functions $\Gamma(d) \triangleq \mathbb{E}[\Delta u_k \Delta u_{k+d}]$. \\
\textit{Step 1: Calculate $\bar{J}$ in terms of $\Gamma(d)$}: \\
Let us first compute
\begin{equation*}
\mathbb{E}[ x_t^T W x_t + u_{t,c}^T U u_{t,c} ] = \mbox{tr}(W \mbox{Cov}(x_t)) + \mbox{tr}(U \mbox{Cov}(u_{t,c})),
\end{equation*} 
for fixed $t \ge 0$. It can be seen that 
\begin{align}
&x_t = l_{1,\{\eta_{k}\}}(w_{-\infty:t-1},v_{-\infty:t-1}) + \gamma_t(\Delta u_{-\infty:t-1}), \label{eq:x1} \\
&u_{t,c} = l_{2,\{\eta_{k}\}}(w_{-\infty:t-1},v_{-\infty:t}) + \eta_t L_{(b)} \gamma_t(\Delta u_{-\infty:t-1}) \nonumber \\ &+ \eta_t \Delta u_t,\nonumber \\
& \gamma_t(\Delta u_{-\infty:t-1}) =  \sum_{i = 1-t}^{\infty} \bigg[ \prod_{j = 1 - i}^{t-1} (A + \eta_j BL_{(b)}) \bigg] \eta_{-i} B \Delta u_{-i}, \nonumber
\end{align}
where $l_1$ and $l_2$ are linear functions of the process and sensor noise for fixed realizations of the drop process $\eta_k$. Since $\{w_k\}$ and $\{v_k\}$ are independent of $\Delta u_k$, we observe that
\begin{align}
\bar{J} &= J_{(b)}(p_d) + \frac{1}{N} \lim_{N \rightarrow \infty} \sum_{t = 0}^{N-1} \Bigg( \mbox{tr}(W \mbox{Cov}(\gamma_t)) + \nonumber  \\
& \mbox{tr}(U \mbox{Cov}(\eta_t [L_{(b)} \gamma_t + \Delta u_t])) \Bigg) . \label{eq:Jappend}
\end{align}
\textit{Step 1a: Calculate} $\mbox{Cov}(\gamma_t)$: \\
Define $Z_t \triangleq \sum_{i = 1-t}^\infty  \gamma_{i,t} \gamma_{i,t}^T$ where
\begin{equation*}
 \gamma_{i,t} \triangleq \bigg[\prod_{j = 1-i}^{t-1} (A + \eta_j BL_{(b)})\bigg]\eta_{-i} B \Delta u_{-i}.
\end{equation*}
We see that $\mathbb{E}[Z_{t+1}]$ is equal to 
\begin{displaymath}
\mathbb{E}[(A+\eta_t BL_{(b)})Z_t(A+\eta_t BL_{(b)})^T + \eta_t^2 B \Delta u_t \Delta u_t^T B^T].
\end{displaymath}
Since $\eta_t$ is independent of $Z_t$, we have
\begin{equation*}
\mathbb{E}[Z_{t+1}] = \mathcal{L}_1(\mathbb{E}[Z_{t}]) + \bar{p}_d B\Gamma(0)B^T.
\end{equation*}
Since $\mathcal{L}_1$ is stable and the system has been running since $k = -\infty$, $\mathbb{E}[Z_t]$ is the unique solution of the following fixed point equation.
\begin{align*}
E[Z_t] = \mathcal{L}_1(\mathbb{E}[Z_{t}]) + \bar{p}_d B\Gamma(0)B^T = L_1(B\Gamma(0)B^T).     
\end{align*}
In addition, let 
\begin{align*}
Y_t^d  &\triangleq \sum_{i = 1-t}^\infty  \xi_{i,t}^d \gamma_{i,t}^T,  \\
\xi_{i,t}^d &\triangleq  \bigg[\prod_{j = 1-i-d}^{t-1} (A + \eta_j BL_{(b)})\bigg]\eta_{-i-d} B \Delta u_{-i-d}.
\end{align*}
By similar reasoning, we find that $E[Y_t^d]$ equals
\begin{equation*}
 L_1\left(\bar{p}_d(A+BL_{(b)})(A+\bar{p}_d BL_{(b)})^{d-1} B \Gamma(d) B^T \right).
\end{equation*}
We argue
\begin{align}
\mbox{Cov}(\gamma_t) &= \mathbb{E}\left[Z_t + \sum_{d = 1}^{\infty} Y_t^d + (Y_t^d)^T\right] \nonumber  \nonumber\\   
&= 2 \sum_{d = 1}^{\infty} \mbox{sym}[Y_*^d]  +  L_1(B\Gamma(0)B^T), \label{eq:covJ1}
\end{align}
where 
\begin{equation*}
 Y_*^d = L_1\left(\bar{p}_d(A+BL_{(b)})(A+\bar{p}_d BL_{(b)})^{d-1} B \Gamma(d) B^T \right).
\end{equation*}
\textit{Step 1b: Calculate} $\mbox{Cov}(\eta_t [L_{(b)} \gamma_t + \Delta u_t])$: \\
We argue that 
\begin{equation*}
\mathbb{E}[\eta_t^2 L_{(b)} \gamma_t \Delta u_t^T] = \bar{p}_d^2 L_{(b)} \sum_{d = 0}^{\infty} (A+\bar{p}_dBL_{(b)})^d B \Gamma(d+1).
\end{equation*}
Therefore, we obtain
\begin{align}
& \mbox{Cov}(\eta_t [L_{(b)} \gamma_t + \Delta u_1]) = \bar{p}_d\left(\Gamma(0) + L_{(b)} \mbox{Cov}(\gamma) L_{(b)}^T\right)  \nonumber\\
& + 2 \mbox{sym} \left(\bar{p}_d^2 L_{(b)} \sum_{d = 0}^{\infty} (A+\bar{p}_dBL_{(b)})^d  B \Gamma(d+1) \right) \label{eq:covJ2} , 
\end{align}
where $\mbox{Cov}(\gamma) \triangleq \mbox{Cov}(\gamma_t)$ is given in \eqref{eq:covJ1}. Note, $\mbox{Cov}(\gamma_t)$  is constant in $t$. Substituting \eqref{eq:covJ2} into \eqref{eq:Jappend}, we have $\bar{J}=$
\begin{align*}
&J_{(b)}(p_d) + \mbox{tr}(\bar{p}_d U \Gamma(0)) + \mbox{tr}((W +\bar{p}_d L_{(b)}^TUL_{(b)})\mbox{Cov}(\gamma)) \nonumber  \\
&+ \mbox{tr}\left(2 U \mbox{sym} \bigg(\bar{p}_d^2 L_{(b)} \sum_{d = 0}^{\infty} (A+\bar{p}_dBL_{(b)})^d B \Gamma(d+1) \bigg)\right).
\end{align*}

\textit{Step 2: Calculate $\mathbb{E}[y_k^T y_k' |\mathcal{H}_0]$ in terms of $\Gamma(d)$}: \\
Recall from the proof of Theorem \ref{thm:correlation} and \eqref{eq:correlation}
\begin{equation}
\mathbb{E}[y_k^T y_k'|\mathcal{H}_0] = \mbox{tr} \left(C\mathbb{E}[x_k'x_k^T]C^T\right)  \nonumber.
\end{equation}
We observe that $x_k' = \gamma_k$. Thus, from \eqref{eq:x1}, we assert
\begin{equation}
\lim_{k \rightarrow \infty} \mathbb{E}[y_k^T y_k'|\mathcal{H}_0] =  \mbox{tr} \left(C\mbox{Cov}(\gamma)C^T\right).
\end{equation}

\textit{Step 3: Convert to Frequency Domain}: \\
Optimizing over the autocovariance functions is intractable as there are infinitely many optimization variables. In this case, as in the work \cite{Mo2014}, we will leverage Bochner's theorem in \cite[p.64]{chonavel2002statistical} (see also \cite{delsarte1978orthogonal}). This theorem provides a frequency domain representation of an autocovariance function of a stationary process:
\begin{theorem}[Bochner's theorem] 
$\Gamma(d)$ is an autocovariance function of a stationary Gaussian process $\{\Delta u_k\}$ if and only if there exists a unique positive Hermitian measure $\nu$ of size $p \times p$ satisfying
\begin{equation*}
\Gamma(d) = \int_{-0.5}^{0.5} \exp(2\pi j d \omega) \mbox{d}\nu(\omega).
\end{equation*}
\end{theorem}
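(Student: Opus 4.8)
The plan is to prove this classical matrix-valued, discrete-time Bochner/Herglotz representation by establishing the two implications separately: the ``only if'' direction via a Fej\'er-kernel construction combined with weak-$*$ compactness of measures, and the ``if'' direction via the Kolmogorov extension theorem. For the forward direction, I would first observe that if $\Gamma(d) = \mathbb{E}[\Delta u_k \Delta u_{k+d}^T]$ is the autocovariance of a stationary process, then for every $N$ the block-Toeplitz matrix $T_N \triangleq [\Gamma(l-k)]_{k,l=0}^{N}$ is the covariance of the stacked vector $(\Delta u_0^T,\dots,\Delta u_N^T)^T$, hence $T_N \succeq 0$, and moreover $\Gamma(-d) = \Gamma(d)^H$. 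The candidate approximating measures are the absolutely continuous $\mbox{d}\nu_N(\omega) = f_N(\omega)\,\mbox{d}\omega$ on $[-0.5, 0.5]$ with Fej\'er density $f_N(\omega) = \sum_{|d| \le N}\left(1 - \frac{|d|}{N+1}\right)\Gamma(d)\exp(-2\pi j d\omega)$. The key identity is $f_N(\omega) = \frac{1}{N+1}\,\phi(\omega)^H T_N \phi(\omega)$, where $\phi(\omega)$ is the $(N+1)p \times p$ block column whose $k$-th block ($k = 0,\dots,N$) is $\exp(-2\pi j k\omega)I$; this exhibits $f_N(\omega)$ as a congruence of $T_N$, hence $f_N(\omega) \succeq 0$ for every $\omega$, while $\nu_N([-0.5,0.5]) = \Gamma(0)$ is bounded uniformly in $N$.

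Since the $\nu_N$ then lie in a weak-$*$ compact set of positive Hermitian matrix measures of uniformly bounded mass on the compact interval, I would extract a weak-$*$ convergent subsequence with limit $\nu$, for which positivity and the Hermitian property are preserved. Testing against the continuous functions $\omega \mapsto \exp(2\pi j d\omega)$ gives $\int \exp(2\pi j d\omega)\,\mbox{d}\nu_N(\omega) = \left(1 - \frac{|d|}{N+1}\right)\Gamma(d) \to \Gamma(d)$, whereas weak-$*$ convergence identifies the limit of these same integrals with $\int \exp(2\pi j d\omega)\,\mbox{d}\nu(\omega)$; equating the two yields the desired representation. Uniqueness then follows because two positive measures with the same Fourier coefficients agree on all trigonometric polynomials, which are dense in $C([-0.5,0.5])$ by Stone--Weierstrass (identifying the endpoints), hence agree as bounded functionals and therefore as measures by the Riesz representation theorem.

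For the reverse direction, given a positive Hermitian measure $\nu$ I would set $\Gamma(d) = \int \exp(2\pi j d\omega)\,\mbox{d}\nu(\omega)$ and verify that $\{\Gamma(d)\}$ is a positive-definite, Hermitian-symmetric sequence: for arbitrary times $t_1,\dots,t_N$ and vectors $a_1,\dots,a_N \in \mathbb{C}^p$, writing $\psi(\omega) \triangleq \sum_i \exp(-2\pi j t_i\omega)a_i$, a short computation gives $\sum_{i,l} a_i^H \Gamma(t_i - t_l) a_l = \int \psi(\omega)^H\,\mbox{d}\nu(\omega)\,\psi(\omega) \ge 0$, and $\Gamma(-d) = \Gamma(d)^H$ because $\nu$ is Hermitian-valued. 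This is precisely the consistency condition under which the Kolmogorov extension theorem produces a (necessarily stationary) zero-mean Gaussian process $\{\Delta u_k\}$ whose finite-dimensional covariance blocks are read off from $\Gamma$, so $\Gamma$ is its autocovariance. For the real-valued convention used in the paper one additionally imposes $\mbox{d}\nu(-\omega) = \overline{\mbox{d}\nu(\omega)}$ to force each $\Gamma(d)$ to be real; I would note this but not belabor it.

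The step I expect to be the main obstacle is the construction in the forward direction: verifying cleanly that the matrix-valued Fej\'er densities $f_N$ are positive semidefinite through the Toeplitz-congruence identity, and then setting up weak-$*$ compactness and convergence for matrix rather than scalar measures. Uniqueness and the converse are comparatively routine given standard functional-analytic machinery. Since this is a classical result, in the body of the paper it would simply be cited, as in \cite{chonavel2002statistical}, and this sketch would serve only to justify the subsequent frequency-domain reparametrization of the autocovariance functions.
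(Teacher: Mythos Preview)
Your proof sketch is correct and follows the standard route to the matrix-valued Herglotz--Bochner theorem, but the paper does not actually prove this statement: it simply cites it from \cite[p.~64]{chonavel2002statistical} and \cite{delsarte1978orthogonal} and uses it as a black box in the frequency-domain reparametrization of Step~3 in the proof of Theorem~\ref{thm:2ndwatermark}. You anticipated this in your final paragraph, and that anticipation is accurate. There is therefore nothing to compare at the level of proof strategy; your Fej\'er-kernel/weak-$*$ compactness argument for existence and Stone--Weierstrass/Riesz argument for uniqueness are the classical ones and would be appropriate if a self-contained proof were desired, but the paper has no corresponding argument of its own.
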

Note that a positive Hermitian measure $\nu$ takes a Borel set in $[-0.5~0.5]$ and outputs a positive semidefinite Hermitian matrix in $\mathbb{C}^{p \times p}$. We choose to optimize over $\tilde{\Gamma}(d)$, which has bijective relationship with $\Gamma(d)$. By assumption $\tilde{\Gamma}(d)$ is an autocovariance function of a stationary Gaussian process. As a result, we can use Bochner's theorem to rewrite $\tilde{\Gamma}(d)$ in terms of a Riemann sum. Specifically,
\begin{equation}
\tilde{\Gamma}(d) = \lim_{\sigma \rightarrow 0} 2 \mbox{Re} \left[ \sum_{i = 1}^q \exp(2 \pi j d \omega_i) \tilde{\nu}(I_i) \right],
\end{equation}
where $I_i \cap I_j = \emptyset,~~ \cup_{i = 1}^q I_i = [0,~0.5], ~~\omega_i \in I_i$
and $\sigma$ is the maximum length of $I_i$. Here, we  also leverage the fact that $\tilde{\Gamma}(d)$ is real. Moreover, from \eqref{eq:covJ1}, we see that
\begin{align*}
&\mbox{Cov}(\gamma) \\
&= \lim_{\sigma \rightarrow 0} \sum_{i=1}^q \Big( 2\mbox{Re} \Big[ 2 \mbox{sym} \Big( L_1 \Big[ \bar{p}_d \exp(2 \pi j \omega_i) \bar{\rho} (A+BL_{(b)}) \nonumber \\
& \sum_{d=1}^{\infty} (\bar{\rho} \exp(2\pi j\omega_i) (A+\bar{p}_dBL_{(b)}))^{d-1} B \tilde{\nu}(I_i) B^T \Big] \Big) \\
& + L_1 \Big[B \tilde{\nu}(I_i) B^T \Big] \Big] \Big),
\end{align*}
\begin{align}
&= \lim_{\sigma \rightarrow 0} \sum_{i=1}^q \Big( 2\mbox{Re} \Big[ 2 \mbox{sym} \Big( L_1 \Big[ \bar{p}_d \exp(2 \pi j \omega_i) \bar{\rho} (A+BL_{(b)}) \nonumber \\
& (I - \bar{\rho} \exp(2\pi j\omega_i) (A+\bar{p}_dBL_{(b)}))^{-1} B \tilde{\nu}(I_i) B^T \Big] \Big) \nonumber \\
& + L_1 \Big[B \tilde{\nu}(I_i) B^T \Big] \Big] \Big), \nonumber \\
&= \lim_{\sigma \rightarrow 0} \sum_{i=1}^q F_2(\omega_i, \tilde{\nu}(I_i), p_d).
\end{align}
The inverse is well defined since we showed $(A+\bar{p}_dBL_{(b)})$ is Schur stable. By similar reasoning it can be shown that
\begin{equation}
\bar{J} = J_{(b)}(p_d) + \lim_{\sigma \rightarrow 0} \sum_{i=1}^q F_1(\omega_i, \tilde{\nu}(I_i), p_d).
\end{equation}
Replacing $\rho(A_\omega) \le \bar{\rho}$ with Assumption 1 in problem \eqref{eq:opt2}, we arrive at the following equivalent formulation:
\begin{equation}
\begin{aligned}
& \underset{\tilde{\nu}({I}_i),p_d}{\text{maximize}}
& & \lim_{\sigma \rightarrow 0} \sum_{i=1}^q \mbox{tr}(C F_2(\omega_i, \tilde{\nu}(I_i), p_d) C^T) \\
& \text{subject to}
& & J_{(b)}(p_d) + \lim_{\sigma \rightarrow 0} \sum_{i=1}^q F_1(\omega_i, \tilde{\nu}(I_i), p_d) \leq \delta, \\
& & & 0 \leq p_d \leq 1. \label{eq:opt4}
\end{aligned}
\end{equation}
\textit{Step 4: Demonstrate Equivalence}: \\
The rest of the result follows from Steps 3 and 4 in the proof of Theorem 6 in \cite{Mo2014} when $p_d < 1$. In particular, we can leverage the linearity of $F_2$ and $F_1$ in $H$ for fixed $p_d < 1$ and $\omega$ to show that the optimal value of \eqref{eq:opt3} is an upper bound on the optimal value for problem \eqref{eq:opt4}. Then, we show that for Borel set $S_b \subset [-0.5,~0.5]$, the measure
\begin{equation}
\tilde{\nu}(S_b) = \mathbb{I}_{\omega_{*} \in S_b} H_* + \mathbb{I}_{-\omega_{*} \in S_b} \mbox{conj}(H_*),
\end{equation}
where $\mathbb{I}$ is the indicator function and $\mbox{conj}$ refers to the complex conjugate, achieves this upper bound. The resulting autocovariance function is
\begin{equation}
\Gamma(d) = 2 \bar{\rho}^{|d|} \mbox{Re}(\exp(2 \pi j d \omega_{*}) H_*),
\end{equation}
and can be generated by the HMM \eqref{eq:optHMM} if there exists an  optimal $H_*$, which has rank 1. Theorem 7 of \cite{Chabukswar2013} demonstrates the existence of such a solution, while the associated proof shows how such a solution can be constructed from an  optimal $H_*$ with rank greater than 1. When $p_d = 1$, $F_2$ and $F_1$ are identically 0, establishing the equivalence of \eqref{eq:opt3} and \eqref{eq:opt4} in this scenario. Note also in this case, (if $J_{(b)}(p_d) \le \delta$) any stationary Gaussian process in the feasible region is optimal since the resulting additive input is immediately dropped. \end{proof}


%

\end{document}